\def\equationautorefname~#1\null{%
  (#1)\null
}
\newenvironment{breakablealgorithm}
{
    \begin{center}
        \refstepcounter{algorithm}
        \hrule height.8pt depth0pt \kern2pt
        \renewcommand{\caption}[2][\relax]{
            {\raggedright\textbf{\ALG@name~\thealgorithm} ##2\par}%
            \ifx\relax##1\relax 
                \addcontentsline{loa}{algorithm}{\protect\numberline{\thealgorithm}##2}%
            \else 
                \addcontentsline{loa}{algorithm}{\protect\numberline{\thealgorithm}##1}%
            \fi
            \kern2pt\hrule\kern2pt
}
        }{
        \kern2pt\hrule\relax
    \end{center}
}
\newtheorem{definition}{Definition}[section]
\newtheorem{theorem}{Theorem}[section]
\newtheorem{lemma}{Lemma}[section]
\newcommand{\eg}{\emph{e.g., }}
\newcommand{\ie}{\emph{i.e., }}
\title{Transitivity Preserving Projection in Directed Hypergraphs}
\author{Eric Parsonage}
\author{Matthew Roughan}
\author{Hung X Nguyen}
\affiliation{organization={The University of Adelaide},
            addressline={School of Mathematical and Computer Sciences},
            city={Adelaide, South Australia},
            country={Australia}}
\date{}							
\begin{document}

\begin{abstract}
Directed hypergraphs are vital for modeling complex polyadic relationships in domains such as discrete mathematics, computer science, network security, and systems modeling. However, their inherent complexity often impedes effective visualization and analysis, particularly for large graphs. This paper introduces a novel Transitivity Preserving Projection (TPP) to address the limitations of the computationally intensive Basu and Blanning projection (BBP), which can paradoxically increase complexity by flattening transitive relationships. TPP offers a minimal and complete representation of relationships within a chosen subset of elements, capturing only irreducible dominant metapaths to ensure the smallest set of edges while preserving all essential transitive and direct connections. This approach significantly enhances visualization by reducing edge proliferation and maintains the integrity of the original hypergraph's structure. We develop an efficient algorithm leveraging the set-trie data structure, reducing the computational complexity from  an exponential number of metapath searches in BBP to a linear number of metapath searches with polynomial-time filtering, enabling scalability for real-world applications. Experimental results demonstrate TPP's superior performance, completing projections in seconds on graphs where BBP fails to terminate within 24 hours. By providing a minimal yet complete view of relationships, TPP supports applications in network security and supply chain analysis, offering a clearer, more efficient framework for hypergraph simplification and analysis.


\end{abstract}

\maketitle
\pagenumbering{arabic} 

\section{Introduction}

A classical graph (or network) does not lend itself well to modeling polyadic relationships between sets of elements. As a result, more generalized graphical structures have been developed, including metagraphs and directed hypergraphs \citep{gallo1993directed,doi:10.1089/cmb.2023.0242,10.1007/3-540-45446-2_20,math10050759}. Hypergraphs are similar to metagraphs - they are both a generalization of classical graphs where sets of elements are connected by a single hyperedge or metaedge. Structurally they are the same, the key differences between metagraphs and hypergraphs are in the way paths are defined on these graphs, as explained in~\citet{parsonage2024}.  Directed hypergraphs have been extensively studied, but there are concepts developed under the heading of metagraph theory that have yet to be fully incorporated into the domain of hypergraphs. One such concept is that of projection.


In situations where a metagraph or directed hypergraph has more than a few edges and/or elements, the resulting complexity can impede the visualization advantages associated with such a structure~\citep{9721603,10290912,10673783}. To address this challenge, it is beneficial to concentrate on smaller subsets of elements and their corresponding relationships by condensing the structure into a more succinct metagraph representation~\citep{10290912,10673783}. In their work \textit{Metagraphs and their Applications} \citet{basu2007metagraphs} introduce the concept of projection as a means to condense a graph with the stated primary objective of abstracting away extraneous details and accentuating solely the relationships among the elements on which the projection is focused.  Moreover, Basu and Blanning's projection flattens out the relationships between the members of the generating sets over which we project so transitive relations are not explicitly obvious. This removes structure and increases the difficulty of visualization because the projection can have more edges than the metagraph it is ``simplifying."



We develop alternative approaches to projection that both demonstrate significantly improved performance even in Basu and Blanning's test cases and provide a condensed view of the network that (i) preserves important (and well-defined) structure, and (ii) has a much more feasible implementation. The result is the Transitivity Preserving Projection (TPP). Proofs are provided to show that the proposed TPP exhibits the necessary properties to fulfil its intended purpose, in particular that it produces a minimal representation for particular classes of directed hypergraphs that has applications beyond just simple visualization~\citep{ausiello1986}.
We also develop an algorithm to perform the projection and show the time complexity is significantly less than that of Basu and Blanning's projection.





Our main contributions in this paper are

\begin{enumerate}
    \item  \textbf{A novel Transitivity Preserving Projection (TPP)} method for directed hypergraphs (and metagraphs) that preserves transitive relationships while reducing structural complexity. We provide rigorous definitions and proofs establishing that TPP is:
    \begin{itemize}
        \item \textit{Unique} — each projection is well-defined and deterministic.
        \item \textit{Idempotent} — applying the projection multiple times yields the same result.
        \item \textit{Minimal} — it contains the smallest set of edges necessary to preserve all dominant relationships.
    \end{itemize}

    \item \textbf{Scalable projection algorithms} that leverage the efficiency of \textit{set-trie} data structures for rapid subset and superset queries. Our algorithms apply \textit{edge combining} techniques and \textit{breadth-first search} to minimize the search space, and systematically filter \textit{dominant and irreducible metapaths} to construct minimal projections. Unlike prior approaches, our TPP algorithm reduces the number of required metapath searches from exponential (in the size of the projection set) to linear, offering a significant improvement in computational tractability.

      \item \textbf{Improved visualization and structural clarity of directed hypergraphs}:
The Transitivity Preserving Projection (TPP) enhances the interpretability of complex metagraphs by preserving transitive relationships while minimizing edge redundancy. Unlike traditional projection methods that often obscure structural insights by flattening or over-representing pairwise connections, TPP maintains the logical flow of dependencies inherent in the original system. This results in a more concise and semantically meaningful representation, which facilitates clearer visual analysis. Such clarity is particularly valuable in domains where understanding system behavior is critical—such as network security, infrastructure modeling, and organizational analysis—enabling more effective communication, hypothesis generation, and decision-making.

\end{enumerate}

We evaluate TPP across a range of metagraph examples and demonstrate its substantial advantages in both information condensation and computational efficiency. For instance, TPP processes moderate-sized graphs in seconds, whereas the Basu and Blanning algorithm fails to complete within a reasonable timeframe. TPP also scales effectively to large, real-world networks—such as those found in supply chains, cybersecurity infrastructures, and university systems -- where traditional methods are hindered by memory or performance limitations. The algorithms are implemented in a reusable and extensible Python  framework, enabling practical adoption and further research.






\section{Background}


\subsection{Hypergraphs and metagraphs}

A metagraph (as defined in \citet{basu2007metagraphs}) is a generalised graph-theoretic structure that can be described as a collection of directed set-to-set mappings. 

Generally, graphs usually have two components, vertices and edges. Metagraphs, however, are described using three: a generating set $X$, a set of edges\footnote{For conciseness we refer to hyperedges and metaedges simply as edges, and likewise for vertices.} $E$, and vertices. Since vertices are nothing but subsets of $X$ that appear in some edge, this division is somewhat arbitrary. More formally: 

\begin{definition}[Generating set]
    \label{def:generatingset}
    \sloppy The variables of interest in a metagraph form the set of elements called the generating set $\mathit{X =\{x_1, x_2, x_3, \ldots\}}$.
\end{definition}

\begin{definition}[Edge]
    \label{def:edge}
    In a metagraph an edge $\mathit{e}$ on the generating set $X$ is a pair $e = \langle V_e, W_e \rangle$ consisting of an invertex $\mathit{V_e \subset X}$ and an outvertex $\mathit{W_e \subset X}$. We further specify that a $B$-edge is such that $|W_e|=1$, \ie a many-to-1 mapping, and an $F$-edge where $|V_e|=1$, \ie a 1-to-many mapping. 
\end{definition}

\begin{definition}[Metagraph]
    \label{def:metagraph}
    A metagraph $\mathit{\mathcal{H}= \langle X, E \rangle}$ is a structure specified by its generating set $\mathit{X}$ and a set of edges $\mathit{E}$ on $X$.
\end{definition}

See \autoref{fig:mg} for an example metagraph, which shows subsets of a generating set $X = \{ x_1, \ldots, x_9\}$ connected by a set of mappings (edges) $E=\{e_1, e_2, e_3, e_4\}$ noting that empty subsets are allowed.


\begin{figure}[htb]
\centering
\includegraphics[width=0.7\textwidth]{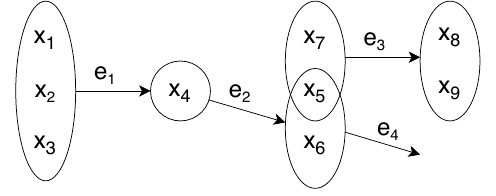}
\caption{A metagraph.
The generating set of this metagraph is $\{x_1, x_2, \ldots, x_9\}$. The edge $e_2$ can be described by the pair $\langle \{x_4\}, \{x_5, x_6\} \rangle$, which are respectively the invertex and outvertex of the edge.
An example metapath is $M(\{x_4, x_7\}, \{ x_8\}) = \langle e_2, e_3 \rangle$.
}
\label{fig:mg}
\end{figure}

Directed hypergraphs\footnote{In this paper, we are concerned with directed hypergraphs and thus refer to directed hypergraphs simply as hypergraphs.} are almost identical to metagraphs~\cite{parsonage2024}. Like metagraphs, each edge in a hypergraph is a directed set-to-set mapping, where the source set is called the tail of the edge and the target set is called the head of the edge. Thus a hypergraph can be defined as follows:

\begin{definition}[Hypergraph]
\label{def:Hypergraph}
A directed hypergraph is a pair $H=(V,E)$, where $V=\{v_1,v_2,\ldots,v_n\}$ is the set of vertices, and $E= \{e_1,e_2,\ldots,e_m\}$ is the set of edges. A hypergraph composed purely of B-edges (F-edges) is called a B-hypergraph (F-hypergraph).
\end{definition}

\begin{figure}[htb]
\centering
\includegraphics[width=0.7\textwidth]{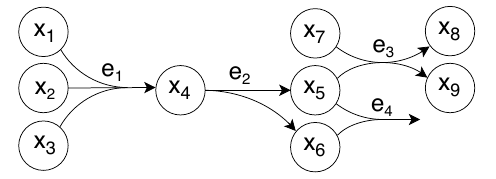}
\caption{A hypergraph representation of the metagraph shown in \autoref{fig:mg}.
Edges can have multiple sources, \eg edge $e_1$, multiple targets, \eg edge $e_2$ or both, \eg edge  $e_3$. Edges can also have either an empty head or tail, \eg edge  $e_4$.
}
\label{fig:hypergraph}
\end{figure}

\autoref{fig:hypergraph} represents a hypergraph. Edges can have multiple sources, \eg edge $e_1$ (which is a B-edge), multiple targets, \eg edge $e_2$ (which is an F-edge), or both, \eg edge $e_3$. Edges can also have either an empty head or tail, \eg edge $e_4$. Note that as a convention we visualise metagraphs and hypergraphs differently, so the hypergraph of \autoref{fig:hypergraph} is equivalent to the metagraph of \autoref{fig:mg}.


A key concept used here is that of transitive closure. In simple graph settings, transitive closure represents a completion of the graph such that transitive relationships are preserved. For instance, if there is an edge from A to B, and from B to C, a closure might highlight the fact that there will be a path from A to C by including an edge from A to C. In general such closures can be applied with respect to an arbitrary relation, inclusive of (abstract) weights on edges and so the concept is quite general. We will see that projection is intimately linked to the concept of transitive closure --- transitivity is a desirable property to preserve in any simplification. For instance, it is desirable that a projection (that condenses some part of a metagraph) preserve those transitive properties of the input that are still accessible in the output. 

However, transitive closure has not been uniquely defined for metagraphs. There is a direct connection to the succinct transitive closure of a B-Hypergraph, as described in \citep{ausiello2016}, however, that has the clear limitation of applying only to a subset of cases. So in later parts of the paper we seek to link up to, and extend ideas expressed in \citep{ausiello2016}.



\subsection{Paths in metagraphs}

A core concept in the theory of simple graphs is that of a {\em path.} This is even more true in the study of metagraphs, but there are now two non-equivalent ideas under the heading of paths. First a simple path generalises the idea of path in the most obvious manner by considering that an edge in a metagraph expresses a relationship betwen every member of the invertex to every member of the outvertex. Thus, simple paths are defined as follows:

\begin{definition}[Simple path]
    \label{def:simple-path}
    A simple-path is a sequence of edges $\langle e_1,\allowbreak\ e_2,\allowbreak\ \ldots,\allowbreak\ e_n \rangle$
 from element $\mathit{x}$ to element $\mathit{y}$ with $\mathit{x \in invertex(e_1),\allowbreak\ y \in outvertex(e_n)}$ and $\mathit{\forall \; e_i, i = 1,\ldots,n-1}$, we have that $\mathit{outvertex(e_i)}$ $\mathit{\cap}$ $\mathit{invertex \allowbreak (e_{i+1}) \neq \emptyset}$.
\end{definition}

It is arguable that the simple path definition provides no additional value for metagraphs over simple graphs as it effectively collapses the metagraph into its simple elements. A more important concept then is the idea of a metapath as defined below: 

\begin{definition}[Metapath]
    \label{def:metapath}
    Given a metagraph $\mathit{\mathcal{H}= \langle X, E \rangle}$, a metapath $\mathit{M(B, C)}$ from a source $\mathit{B \subset X}$ to a target $\mathit{C \subset X}$ is a set of edges $\mathit{E' \subseteq E}$ such that
    \begin{enumerate}
        \item every $\mathit{e \in E'}$ is on a simple-path from some element in $\mathit{B}$ to some element in $\mathit{C}$ and,
        \item $\mathit{\forall e = \langle V_e, W_e \rangle \in E', \bigcup_{e}V_{e} \setminus \bigcup_{e}W_{e} \subseteq B}$, (the source includes all the pure inputs on the metapath) and,
        \item $\mathit{C \subseteq \bigcup_{e}W_{e}}$, (the target is included in (a subset of) the union of outvertices).
    \end{enumerate}
\end{definition}

One way to intuitively think about a metapath lies in a possible interpretation of an edge as a logical relation. For instance, we might interpret an edge $( \{x_1,x_2\}, \{x_4\} )$ as saying that $x_4$ is enabled by $x_1 \mbox{ AND } x_2$. A metapath then defines a set of edges and nodes that extends this idea to provide 'path' enabling the outset of the metapath. 

Metapaths are quite different from simple paths: (i) metapaths are a {\em unordered} set of edges, whereas a simple path is a sequence, and (ii) a metapath describes a relationship between two sets.  

An example metapath in \autoref{fig:mg}, would be $M(\{x_1, x_2, x_3\}, \{ x_6\}) = \{ e_1, e_2 \}$, which says that edges $e_1$ and $e_2$ and inputs $\{x_1, x_2, x_3\}$ enable the output $x_6$. A counter-example is the lack of metapath from $\{x_1, x_2, x_3\}$ to $x_9$ because there is no means for the input elements to enable $x_7$, which is needed to enable $x_9$. 

As metapaths are sets not sequences, we cannot easily define a``shortest metapath," but there is an analogous concept of minimality called {\em dominance} as defined as follows:

\begin{definition}[Edge-dominant]
    Given a metagraph $\mathit{\mathcal{H} = \langle X, E \rangle}$, for any two sets of elements $\mathit{B,C \subset X}$, a metapath $\mathit{M(B, C)=E'}$ is said to be edge-dominant if there is no proper subset of edges from $\mathit{E'}$ forming a metapath from $\mathit{B}$ to $\mathit{C}$.
    \label{def:edge-dominant}
\end{definition}

\begin{definition}[Input-dominant]
    Given a metagraph $\mathit{\mathcal{H} = \langle X, E \rangle}$, for any two sets of elements $B,C \subset X$, a metapath $M(B, C)$ is said to be input-dominant if there does not exist any metapath $M'(B', C)$ with $B' \subset B$.
    \label{def:input-dominant}
\end{definition}

\begin{definition}[Dominant]
    \sloppy A metapath is dominant if it is both input-dominant and edge-dominant.
    \label{def:dominant}
\end{definition}



The definition of metapath above comes from \citet{basu2007metagraphs}, whose definition is directly related to their algorithm for finding dominant paths. We can, in fact, remove clause 1 from Definition~\ref{def:metapath} without changing results, such as the set of dominant paths. 


\subsection{Existing projection algorithm}

Basu and Blanning's Projection (BBP), \citep{basu2007metagraphs, blanning1997} is the state of the art for condensing the information of a metagraph. Their projection is defined as follows:

\begin{definition}[Basu and Blanning's Projection]
    \label{projection}
    Given a metagraph $\mathcal{H}=\langle\mathit{X,E}\rangle$ and $\mathit{X'}
        \subseteq\mathit{X}$, a metagraph $\mathcal{H'}=\langle\mathit{X',E'}\rangle$ is a projection of $\mathcal{H}$ over $\mathit{X'}$ if:
    \begin{enumerate}
        \item For any $\mathit{e'}=\langle\mathit{V',W'}\rangle\in\mathit{E'}$ and for any $\mathit{x'}\in\mathit{W'}$ there is a dominant metapath $\mathit{M(V',\{x'\})}$ in $\mathcal{H}$ and
        \item For every $\mathit{x'\in X'}$, if there is any dominant metapath $\mathit{M(V,\{x'\})}$ in $\mathcal{H}$ with $\mathit{V}\subseteq\mathit{X'}$, then there is an edge $\langle\mathit{V',W'}\rangle\in\mathit{E'}$ such that $\mathit{V'=V}$ and $\mathit{x'}\in\mathit{W'}$, and
        \item No two edges in $\mathit{E'}$ have the same invertex.
    \end{enumerate}
    \label{def:projection}
\end{definition}

The intention of the BBP is visible in that definition. Part 1 expresses that we do not introduce new relationships through the projection, and Part 2, expresses that we preserve metapaths present in the original metagraph as edges in the new graph. Thus if applied such that $X' = X$, it would represent a type of transitive closure. 

Intuitively speaking, a transitive closure of a graph is a `complexification' of the graph in the sense that if we were to take a connected graph, the transitive closure would increase the number of edges to $O(n^2)$ up from commonly sparse graphs where the number of edges might be $O(n)$ or $O(n \log n)$. 

The reason that BBP might be considered a simplification is that we project onto a reduced generating set $X'$ which, in most applications, would be a (small) proper subset of $X$. However we can do better. 

To add formally to this description we need to define projection reverse mapping, which is an operation we shall use later in any case. 
\begin{definition}[Projection Reverse Mapping]
    Given a metagraph $\mathcal{H}=\langle\mathit{X,E}\rangle$ and its projection $\mathcal{H'}=\langle\mathit{X',E'}\rangle$ over $\mathit{X'}\subseteq\mathit{X}$ then for any $\mathit{e'}=\langle\mathit{V',W'}\rangle\in\mathit{E'}$ the set $\mathit{C(e')}$ is the set of metapaths in $\mathcal{H}$ that correspond to $\mathit{e'}$.
    \label{def:composition}
\end{definition}

Now we can formally state that the  BPP is a surjection from dominant metapaths in $\mathcal{H}$ to edges in $\mathcal{H'}$ and the reverse map $\mathit{C(e')}$ of an edge $\mathit{e'}$ in $\mathcal{H'}$ is the preimage of $\mathit{e'}$ under projection.


\subsection{Existing Path Finding Algorithms}

The existing algorithms are due to \citet{blanning1997} as republished in the book \textit{Metagraphs and their Applications} \citep{basu2007metagraphs}. There are two major algorithms in the book, one to find metapaths (Definition~\ref{def:metapath}) between source and target sets and another to build a projection (Definition~\ref{def:projection}). Nearly all metagraph algorithms use the path finding algorithm thus its efficiency is critical.





Basu and Blanning's algorithm for finding paths between a source set and a target starts by building an adjacency matrix $A$ of edges and then calculating its transitive closure $A^*=A + A^2 + A^3 + \ldots + A^n$, such that $A^*_{i,j}$  has all the simple-paths between element $i$ and element $j$ in the $(i,j)$th entry. This is equivalent to a brute-force calculation of all shortest paths on a simple graph, and we will show that we can do better.

\subsection{Existing Projection Algorithms}

It is important that such a transformation correctly maintains relationships between elements of the transformation, even though the full set of edges between them may not be visible. Thus, it is crucial that such a transformation is rigorously defined and has provable properties. To this end \citet{blanning1997} and \citet{basu2007metagraphs} provide proofs of the uniqueness of the resulting projections.

In both those works the authors present an example metagraph (isomorphic to \autoref{fig:motivation}).
Here we conduct a forensic examination of this example and discover an authorial mistake in its projection calculation. By extending the example, we demonstrate how this error obscures the underlying complexity of the projection model. We also show a construction that will result in a countably infinite number of cases where the projection contains more edges and nodes than the initial metagraph, contradicting the intended purpose of providing a simplified view.

\begin{figure}[!p]
    \centering
    \begin{subfigure}{0.95\linewidth}
    \hspace*{0.02\linewidth}
     \begin{center}
     
\resizebox{0.7\textwidth}{!}{
    \begin{tikzpicture}
        \node[minimum width=1cm,draw,circle] (1) at (0,3.5) {$\mathit{x_1}$};
        \node[minimum width=1cm,draw,circle] (2) at (0,1){$\mathit{x_2}$};

        \node[minimum width=1cm,draw,circle] (3) at (2.5,4) {$\mathit{x_3}$};
        \node (4) at (2.5,2.5) {$\mathit{x_4}$};
        \node[minimum width=1cm,draw,circle] (5) at (2.5,1){$\mathit{x_5}$};

        \node[ellipse, minimum width = 1.8cm, minimum height = 3cm, draw] (3_4) at (2.5,3.5){};

        \node[ellipse, minimum width = 1.8cm, minimum height = 3cm, draw] (4_5) at (2.5,1.5){};

        \node[minimum width=1cm,draw,circle] (6) at (5,3.5) {$\mathit{x_6}$};
        \node[minimum width=1cm,draw,circle] (7) at (5,1.5){$\mathit{x_7}$};
        \node[ellipse, minimum width = 2cm, minimum height = 4cm, draw] (6_7) at (5,2.5){};
        \node[minimum width=1cm,draw,circle] (8) at (7.5,2.5){$\mathit{x_8}$};
        \draw[-{Latex[length=2mm]}] (1) -- (3_4) node [midway, above, sloped] (TextNode) {$e_1$};
        \draw[-{Latex[length=2mm]}] (2) -- (5) node [pos=.35, above, sloped] (TextNode) {$e_3$};

        \draw[-{Latex[length=2mm]}] (3) -- (6) node [midway, above, sloped] (TextNode) {$e_2$};
        \draw[-{Latex[length=2mm]}] (4_5) -- (7) node [pos=.3, above, sloped] (TextNode) {$e_4$};
        \draw[-{Latex[length=2mm]}] (6_7) -- (8) node [midway, above, sloped] (TextNode) {$e_5$};
    \end{tikzpicture}
    }
\end{center}
    
    \caption{Original metagraph.}
    \end{subfigure}

    \begin{subfigure}{0.95\linewidth}
     \begin{center}
     
\resizebox{0.7\textwidth}{!}{
    \begin{tikzpicture}
        \node[minimum width=1cm,draw,circle] (1) at (0,3.5) {$\mathit{x_1}$};
        \node[minimum width=1cm,circle] (2) at (0,1){$\mathit{x_2}$};
        \node[ellipse, draw,minimum width = 2cm, minimum height = 4cm] (1_2) at (0,2.25){};
        \node[ellipse, draw, minimum width = 2cm, minimum height = 4cm, draw] (6_7) at (5,2.5){};

        \node[minimum width=1cm,draw,circle] (6) at (5,3.5) {$\mathit{x_6}$};
        \node[minimum width=1cm,circle] (7) at (5,1.5){$\mathit{x_7}$};
        \node[minimum width=1cm,draw,circle] (8) at (7.5,2.5){$\mathit{x_8}$};
        \node[ellipse, rotate around={21:(0,0)}, minimum width = 4.85cm, minimum height = 2cm, draw] (7_8) at (6.25,2){};

        \path[-{Latex[length=2mm]}] (1) edge [bend left] node [style={pos=.4,sloped,anchor=south,auto=false}] (TextNode) {$e'_1$}(6);
        \path[-{Latex[length=2mm]}] (1_2) edge[bend left = 5] node [style={sloped,anchor=south,auto=false}] (TextNode) {$e'_2$} (7_8);
        \path[-{Latex[length=2mm]}] (6_7) edge[bend left = 50] node [style={sloped,anchor=south,auto=false}] (TextNode) {$e'_3$} (8);
    \end{tikzpicture}
}
\end{center}
    \caption{Reported application of BPP with $X'=\{x_1, x_2, x_6, x_7, x_8\}$ to (a).}
    \label{fig:badprojection}
    \end{subfigure}

    \begin{subfigure}{0.95\linewidth}
     \begin{center}
     
\resizebox{0.7\textwidth}{!}{
    
    \begin{tikzpicture}
        \node[minimum width=1cm,draw,circle] (1) at (0,3.5) {$\mathit{x_1}$};
        \node[minimum width=1cm,circle] (2) at (0,1){$\mathit{x_2}$};
        \node[ellipse, draw,minimum width = 2cm, minimum height = 4cm] (1_2) at (0,2.25){};
        \node[ellipse, draw, minimum width = 2cm, minimum height = 4cm, draw] (6_7) at (5,2.5){};

        \node[minimum width=1cm,draw,circle] (6) at (5,3.5) {$\mathit{x_6}$};
        \node[minimum width=1cm,circle] (7) at (5,1.5){$\mathit{x_7}$};
        \node[minimum width=1cm,draw,circle] (8) at (7.5,2.5){$\mathit{x_8}$};
        \node[ellipse, rotate around={21:(0,0)}, minimum width = 4.85cm, minimum height = 2cm, draw] (7_8) at (6.25,2){};
        \node[ellipse, rotate around={158:(0,0)}, minimum width = 7.28cm, minimum height = 2.2cm, draw] (1_7) at (2.6,2.44){};

        \path[-{Latex[length=2mm]}] (1) edge [bend left] node [style={pos=.4,sloped,anchor=south,auto=false}] (TextNode) {$e'_1$}(6);
        \path[-{Latex[length=2mm]}] (1_2) edge[bend left=5] node [style={sloped,anchor=south,auto=false}] (TextNode) {$e'_2$} (7_8);
        \path[-{Latex[length=2mm]}] (6_7) edge[bend left = 50] node [style={sloped,anchor=south,auto=false}] (TextNode) {$e'_3$} (8);
        \path[-{Latex[length=2mm]}] (1_7) edge[bend right= 22 ] node [style={sloped,anchor=south,auto=false}] (TextNode) {$e'_4$} (8);

    \end{tikzpicture}
    }
\end{center}
    \caption{Actual application  of BPP with $X'=\{x_1, x_2, x_6, x_7, x_8\}$ to (a).}
    \label{fig:projection}
    \end{subfigure}

    \caption{Example isomorphic to that in \citet{blanning1997,basu2007metagraphs} showing their incorrect calculation of BPP and the corrected version derived here.}
    \label{fig:motivation}
\end{figure}
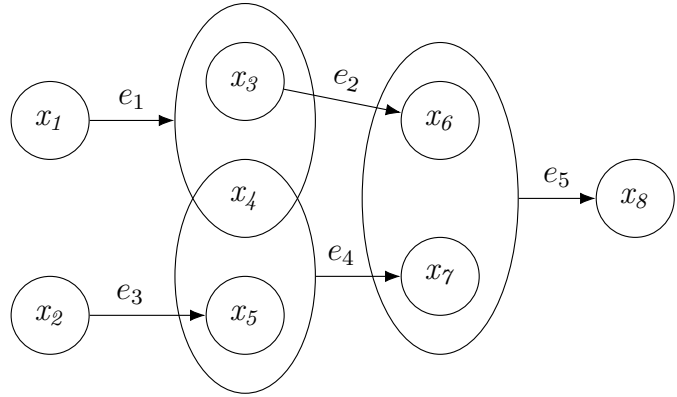
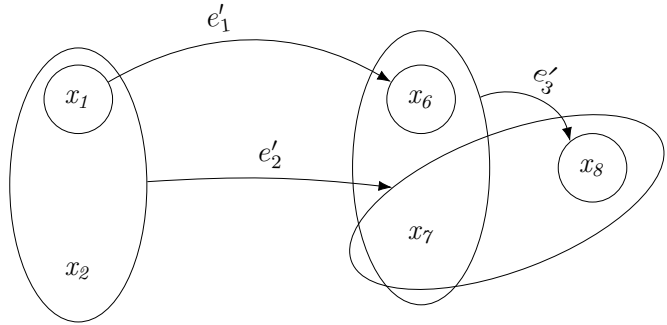
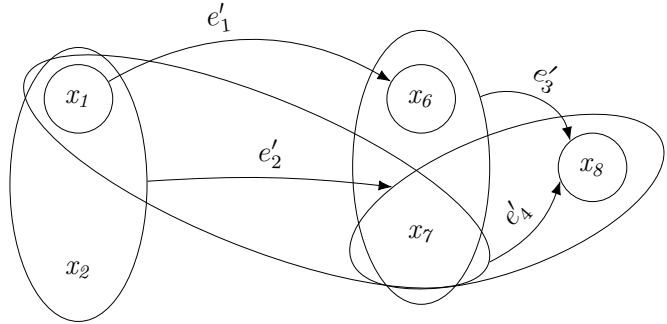


\autoref{fig:motivation} shows the metagraph $\mathit{\mathcal{H} = \langle X, E\rangle}$ with generating set $\mathit{X=\{ x_1, x_2, x_3, x_4, x_5, x_6, x_7, x_8\}}$ \sloppy and edges $\mathit{E = \{e_1, e_2, e_3, e_4, e_5\}}$. In \citep{basu2007metagraphs} and \citep{blanning1997} the authors calculate a projection of this metagraph over ${X'=\{x_1, x_2, x_6, x_7, x_8\}}$, the result of which is shown in \autoref{fig:badprojection}. In both pieces of work the authors provide the same projection (for completeness it is provided in \autoref{fig:badprojection}). Unfortunately in both cases the solution has an error which we will correct in a worked example.

On first inspection the projection in \autoref{fig:badprojection} looks simple enough to convince the reader the methodology will consistently produce projections that are simpler than the projected metagraph.

\subsection{Worked example}\label{sec:worked_example}
Our analysis begins by working through the example above so that we can begin with the correct projection for further investigations.

The construction of a projection revolves around the use of parts (2) and (3) of Definition \ref{def:projection} and is as follows:

\begin{enumerate}
    \item As per Definition \ref{def:projection} part (2), for every $\mathit{x'\in X'}$, for every $\mathit{V'\subseteq X'\backslash\{x\}}$ if there is a dominant metapath $\mathit{M(V',\{x'\})}$ in $\mathcal{H}$ then add it to a set of dominant metapaths $\mathit{\{M_1(V',\{x'\}),\ldots\}}$.
    \item For every unique invertex $\mathit{V'}$ of the metapaths in $\mathit{\{M_1(V',\{x'\}),\ldots\}}$ we construct an edge $\mathit{e' = \langle V', W' \rangle \in E'}$ where $\mathit{W' = \{\bigcup_{i}\,x_i |\exists\; M_1(V',\{x_i\}) \in \{M_1(V',\{x'\}),\ldots\}\}}$ thus ensuring Definition \ref{def:projection} part (3) is met.
\end{enumerate}

In \autoref{fig:motivation} the metagraph $\mathcal{H}=\langle\mathit{X,E}\rangle$ has generating set $\mathit{X = \{x_1, x_2, x_3, x_4, x_5, x_6, x_7, x_8\}}$ and edges $\mathit{E = \{e_1, e_2, e_3, e_4, e_5\}}$. Here we consider a projection over $\mathit{X'=\{x_1, x_2, x_6, x_7, x_8\}}$. Since there are no dominant metapaths from any subset of $\mathit{X'}$ to either $\mathit{x_1}$ or $\mathit{x_2}$ the set of dominant metapaths to be considered for the projection is as follows.

\begin{compactenum}
    \item $\mathit{M_1(\{x_1\}, \{x_6\}) = \{e_1,e_2 \}}$
    \item $\mathit{M_2(\{x_1, x_2\},\{x_7\}) = \{e_1,e_3,e_4 \}}$
    \item $\mathit{M_3(\{x_1, x_2\},\{x_8\}) = \{e_1,e_2,e_3,e_4,e_5 \}}$
    \item $\mathit{M_4(\{x_1, x_7\},\{x_8\}) = \{e_1,e_2,e_5 \}}$
    \item $\mathit{M_5(\{x_6, x_7\},\{x_8\}) = \{e_5\}}$
\end{compactenum}

The error in \autoref{fig:badprojection} stems from a failure to consider the dominant metapath between $\mathit{\{x_1, x_7\}}$ and $\mathit{\{x_8\}}$ using edges $\mathit{\{e_1, e_2, e_5\}}$.

Since $\mathit{M_2(\{x_1, x_2\},\{x_7\})}$ and $M_3(\mathit{\{x_1, x_2\},\{x_8\})}$ are the only dominant paths that share an invertex the metapaths are combined in the following way.

\begin{compactenum}
    \item $\mathit{C(e'_1) =\{ M_1(\{x_1\}, \{x_6\})\}}$
    \item $\mathit{C(e'_2) =\{M_2(\{x_1, x_2\},\{x_7\}), M_3(\{x_1, x_2\},\{x_8\})\}}$
    \item $\mathit{C(e'_3) =\{M_5(\{x_6, x_7\},\{x_8\})\}}$
    \item $\mathit{C(e'_4) =\{M_4(\{x_1, x_7\},\{x_8\})\}}$
\end{compactenum}

Leading to the following edges:

\begin{compactenum}
    \item $\mathit{e'_1 = (\{x_1\}, \{x_6\})}$
    \item $\mathit{e'_2 = (\{x_1, x_2\},\{x_7,x_8\})}$
    \item $\mathit{e'_3 =(\{x_6, x_7\},\{x_8\})}$
    \item $\mathit{e'_4 =(\{x_1, x_7\},\{x_8\})}$
\end{compactenum}

\FloatBarrier
Using the calculation above, \autoref{fig:projection} displays the corrected projection of the metagraph shown in \autoref{fig:motivation} with $\mathcal{H}=\langle\mathit{X,E}\rangle$, where $\mathit{X = \{x_1, x_2, x_3, x_4, x_5, x_6, x_7, x_8\}}$ and $\mathit{E = \{e_1, e_2, e_3, e_4, e_5\}}$, projected over $\mathit{X'=\{x_1, x_2, x_6, x_7, x_8\}}$.


Upon initial examination, the new visualization appears to be as complex as the metagraph shown in \autoref{fig:motivation}, from which it is projected. However, the discovery of edge $\mathit{e'_4 =(\{x_1, x_7\},\{x_8\})}$ uncovers the presence of two input sets that have a path to $\mathit{x_8}$. This realization led to the development of a scheme that illustrates how this projection method can generate more edges in the projection than were present in the original metagraph.

\subsection{Metagraph projection complexity}
\label{sec:mpcomplexity}
In \citet{ausiello2016} the authors define a succinct transitive closure for use in the context of B-Hypergraphs (equivalent to metagraphs with edges that have only a single element in the outvertex). It is defined as follows:

\begin{definition}[Succinct transitive closure of a hypergraph]
    \label{def:succinctclosure}
    Given a directed hypergraph $\mathit{\mathcal{H}= \langle N, H \rangle}$ the succinct transitive closure of $\mathcal{H}$ can be compactly represented by means of a directed hypergraph $\mathit{\mathcal{H}^+= \langle N, H^+ \rangle}$ where $\mathit{\mathcal{H}^+}$ contains a hyperarc $\mathit{(S, t)}$ if and only if $\mathit{S}$ is a source set in $\mathit{\mathcal{H}}$ and a hyperpath $\mathit{\langle S, t \rangle}$ exists in $\mathit{\mathcal{H}}$.
\end{definition}

Definition~\ref{def:succinctclosure} differs from Definition~\ref{def:projection} in that:
\begin{enumerate}
    \item There is no restriction of the closure to a subset of the vertices in the hypergraph.
    \item It does not enforce uniqueness on the invertex of the edges in $\mathit{\mathcal{H}^+}$.
    \item There is no requirement that hyperpaths forming the preimage of edges in $\mathit{\mathcal{H}^+}$ be dominant.
\end{enumerate}

Despite the different intended purposes of the two definitions, the similarities between these definitions and the fact that metagraph projection construction entails finding all possible dominant paths to each of the elements in the generating set leads us to the conclusion that metagraph projection is a form of transitive closure. Thus, we expect that in the general case the number of edges in $\mathcal{H'}$ to have a quadratic relationship with the number of edges in $\mathcal{H}$.

This is not ideal because the purpose of a metagraph projection is to provide a simplified view for visualization. However, some form of a minimal equivalent graph is necessary to capture the relationships between elements of $\mathit{X'}$ resulting from edges involving elements in $\mathit{X\backslash X'}$. Minimum equivalent B-Hypergraphs are discussed in \citet{ausiello1986} and address minimality in terms of redundancy in edges or sources, but that discussion is not relevant here as $\mathit{X'}$ determines the set of sources. Thus, the choice made by \citet{basu2007metagraphs} to apply a form of transitive closure to the relationships between elements of $\mathit{X'}$ seems reasonable as it does not introduce any additional vertex sets. However, their method can generate extra edges without any benefit as we demonstrate in Lemma \ref{complex}, where we show the negative consequences of this blanket approach.

Any measure of complexity should consider the number of edges in the projection, especially when it exceeds the number in the original metagraph.

\begin{lemma}
\label{complex}
    We can generate an infinitely countable set of graphs which have quadratic growth in their edges because the relationships between vertices in $\mathit{X}$ that could be deduced by transitivity are replaced with an edge.
\end{lemma}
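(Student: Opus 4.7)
The plan is to exhibit an explicit countably infinite family of metagraphs $\{\mathcal{H}_n\}_{n\ge 2}$ in which the number of original edges grows linearly in $n$ but the number of edges produced by BBP grows at least quadratically (in fact exponentially). For each $n$, I take generating set $X_n = \{s_1, \ldots, s_n, t_1, \ldots, t_n, z\}$ and edge set $E_n = \{\langle \{s_i\}, \{t_i\} \rangle : 1 \le i \le n\} \cup \{\langle \{t_1, \ldots, t_n\}, \{z\} \rangle\}$, so $|E_n| = n+1$. I project over $X' = X_n$, so that no trimming of the generating set contributes to the blow-up --- the entire effect comes from BBP having to flatten transitive AND-derivations at $z$, exactly the mechanism responsible for the spurious edge $e'_4$ in \autoref{sec:worked_example}.

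The key combinatorial step is to identify, for every subset $S \subseteq \{1, \ldots, n\}$, a dominant metapath $M_S$ from $V_S := \{s_i : i \in S\} \cup \{t_j : j \notin S\}$ to $\{z\}$ using the edges $\{\langle \{s_i\}, \{t_i\} \rangle : i \in S\} \cup \{\langle \{t_1, \ldots, t_n\}, \{z\}\rangle\}$. I would verify the three clauses of Definition~\ref{def:metapath} directly, then check edge-dominance (the collector is the only producer of $z$ and each $\langle\{s_i\},\{t_i\}\rangle$ is the only producer of $t_i$, so no edge is removable) and input-dominance (dropping any $s_i$ from $V_S$ leaves $t_i$ underivable, and dropping any $t_j$ from $V_S$ has the same effect because the corresponding $s_j$ is absent from $V_S$ by choice of $S$).

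To finish, distinct subsets $S \ne S'$ give distinct invertices $V_S \ne V_{S'}$ (they differ in any index in their symmetric difference), so clause (3) of Definition~\ref{def:projection} forces BBP to assign a separate edge to each $V_S$. This produces $|E'_n| \ge 2^n$, which in particular exceeds any quadratic bound for $n \ge 4$, giving the required countably infinite family of witnesses to the lemma. The main obstacle I anticipate is the input-dominance verification --- one must rule out every proper subset $V' \subsetneq V_S$ as an invertex for \emph{some} metapath to $z$, not merely for $M_S$ itself. This argument rests on the structural observations that the collector is the unique producer of $z$, each $t_i$ has the unique producer $\langle \{s_i\},\{t_i\}\rangle$, and each $s_i$ is a pure source; together these force any such $V'$ to contain all of $V_S$, contradicting proper inclusion. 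The remaining counting and edge-dominance checks follow mechanically from the definitions.
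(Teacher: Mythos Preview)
Your argument is correct and in fact delivers a stronger blow-up than the lemma literally asks for (exponential rather than quadratic in $n$), but the route is quite different from the paper's. The paper chains together $n$ copies of the body fragment from \autoref{sec:worked_example}, attaches the tail fragment, and projects over a \emph{proper} subset $X'_n$ that omits the intermediate $C_i,D_i,E_i$ elements; a direct count of dominant metapaths then gives exactly $(n^2+5n+2)/2$ projection edges against $4n+1$ original edges, so the projection overtakes the original for $n>3$. Your collector-plus-feeders construction is considerably cleaner, and the $2^n$ pairwise-incomparable source sets $V_S$ make the input-dominance verification essentially a one-line pigeonhole (each index $i$ must contribute $s_i$ or $t_i$, and $V_S$ contains exactly one). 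The trade-off is that you take $X'=X_n$, so your projection is simply the transitive closure of $\mathcal{H}_n$; blow-up there is unsurprising, whereas the paper's choice of a strict subset $X'\subsetneq X$ more pointedly supports the surrounding narrative that BBP fails \emph{as a simplification tool} even when restricting to fewer elements. If you want to align with that framing without losing the $2^n$ count, insert dummy intermediate nodes $m_i$ (replacing $\langle\{s_i\},\{t_i\}\rangle$ by $\langle\{s_i\},\{m_i\}\rangle$ and $\langle\{m_i\},\{t_i\}\rangle$) and project over $X_n\setminus\{m_1,\ldots,m_n\}$.
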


\begin{proof}
    We start by constructing a metagraph from fragments of the metagraph in \autoref{fig:motivation}. The metagraph $\mathit{\mathcal{H}_n}$ is constructed from $n$ copies of the fragment shown in \autoref{fig:fragments}(\ref{sub@fig:fragments_body}) and one copy of the tail fragment shown in \autoref{fig:fragments}(\ref{sub@fig:fragments_tail}). The fragments are arranged such that body fragment $i=1$ connects to the tail fragment with edge $\mathit{e_{2,1}}$ connecting to element $\mathit{A_0}$ and edge $\mathit{e_{4,1}}$ connecting to element $\mathit{B_0}$ in the tail fragment. Then if $n > 1$ body fragment $i$ connects to body fragment $i-1$ with edge $\mathit{e_{2,i}}$ connecting to element $\mathit{A_{i-1}}$ and edge $\mathit{e_{4,i}}$ connecting to element $\mathit{B_{i-1}}$ in the $(i-1)$th body fragment. Thus, $\mathcal{H}_1$ is the example from \citet{basu2007metagraphs} shown in \autoref{fig:motivation} and $\mathcal{H}_2$ is shown in \autoref{fig:H_2}. Now consider a projection over $\mathit{X'_n = \{A_i, B_i \mid 0\leq i \leq n\}\bigcup F_0}$.
    Using the knowledge we gained from our worked example we can count the edges in the projection $\mathit{H'_n}$ of $\mathit{\mathcal{H}_n}$ over $\mathit{X'_n}$.
    There are three sets of edges to consider:
    \begin{compactenum}
        \item There is an edge for each of the pairs of inputs contained in $\mathit{X'_n}$ that have a path to $\mathit{F_0}$. Each input $\mathit{A_i}$ can be paired with inputs $\{\mathit{B_j \mid 0 \leq j \leq i \}}$ there are a total of $\mathit{\sum_{i=1}^{n}i+1}$ such edges.

        \item There is an edge for each input $\mathit{A_i}$ that has a path to $\mathit{A_0}$ there are $\mathit{n}$ such edges.

        \item The projection will contain one edge due to the edge in the tail fragment of $\mathcal{H}_n$.
    \end{compactenum}

    This gives a total of $\mathit{1 + n + \sum_{i=1}^{n}i+1 = (n^2+5n+2)/2}$ edges in the projection $\mathit{H'_n}$. The graph $\mathcal{H}_n$ has 4n + 1 edges. Thus, for n = 3 both $\mathit{H'_n}$ and $\mathit{H_n}$ have 13 edges and for $n > 3$ then $|E_n| < |E'_n|$.
\end{proof}

\FloatBarrier

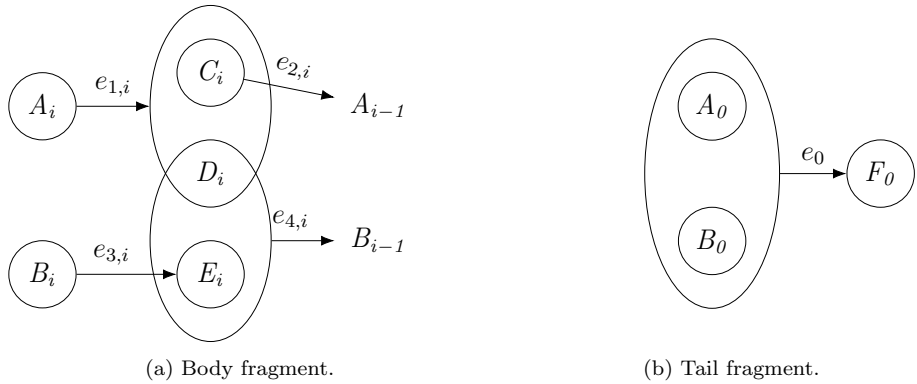
\begin{figure}[ht]
    \begin{center}
        \begin{subfigure}[b]{64mm}

            \resizebox{0.9\linewidth}{!}{

                \begin{tikzpicture}[baseline = {(0,0)}]
                    \node[minimum width=1cm,draw,circle] (A) at (0,3.5) {$\mathit{A_i}$};
                    \node[minimum width=1cm,draw,circle] (B) at (0,1){$\mathit{B_i}$};
                    \node[minimum width=1cm,draw,circle] (C) at (2.5,4) {$\mathit{C_i}$};
                    \node (D) at (2.5,2.5) {$\mathit{D_i}$};
                    \node[minimum width=1cm,draw,circle] (E) at (2.5,1){$\mathit{E_i}$};
                    \node[ellipse, minimum width = 1.8cm, minimum height = 3cm, draw] (C_D) at (2.5,3.5){};
                    \node[ellipse, minimum width = 1.8cm, minimum height = 3cm, draw] (D_E) at (2.5,1.5){};
                    \node[minimum width=1cm,circle] (F) at (5,3.5) {$\mathit{A_{i-1}}$};
                    \node[minimum width=1cm,circle] (G) at (5,1.5){$\mathit{B_{i-1}}$};
                    \draw[-{Latex[length=2mm]}] (A) -- (C_D) node [midway, above, sloped] (TextNode) {$e_{1,i}$};
                    \draw[-{Latex[length=2mm]}] (B) -- (E) node [pos=.35, above, sloped] (TextNode) {$e_{3,i}$};
                    \draw[-{Latex[length=2mm]}] (C) -- (F) node [midway, above, sloped] (TextNode) {$e_{2,i}$};
                    \draw[-{Latex[length=2mm]}] (D_E) -- (G) node [pos=.3, above, sloped] (TextNode) {$e_{4,i}$};
                \end{tikzpicture}
            }

            \caption{Body fragment.}
            \label{fig:fragments_body}
        \end{subfigure}
        \begin{subfigure}[b]{64mm}
            \resizebox{0.9\linewidth}{!}{
                \hspace{20mm}
                \begin{tikzpicture}[baseline={(0,0)}]
                    \node[minimum width=1cm,draw,circle] (A_0) at (0,3.5) {$\mathit{A_0}$};
                    \node[minimum width=1cm,draw,circle] (B_0) at (0,1.5){$\mathit{B_0}$};
                    \node[ellipse, minimum width = 2cm, minimum height = 4cm, draw] (A0_B0) at (0,2.5){};
                    \node[minimum width=1cm,draw,circle] (F) at (2.5,2.5){$\mathit{F_0}$};

                    \draw[-{Latex[length=2mm]}] (A0_B0) -- (F) node [midway, above, sloped] (TextNode) {$e_0$};
                \end{tikzpicture}
            }
            \caption{Tail fragment.}
            \label{fig:fragments_tail}
        \end{subfigure}

        \caption{Graph fragments for constructing $\mathcal{H}_n$.}
        \label{fig:fragments}
    \end{center}
\end{figure}
\FloatBarrier
As anticipated we have uncovered a quadratic relationship between $|E_n|$ and $|E'_n|$. To give a feel for the level of complexity that is reached quickly using this construction we include the graph $\mathit{\mathcal{H}_2 = \langle X_2, E_2 \rangle}$ where $X_2 = \{F_0, A_0, A_1, A_2, B_0, B_1, B_2, C_1, C_2, D_1, D_2, E_1, E_2 \} $ and $E_2 = \{e_{1,2}, e_{2,2}, e_{3,2}, e_{4,2}, e_{1,1}, e_{2,1}, e_{3,1}, e_{4,1}, e_0 \}$ in \autoref{fig:H_2} and $\mathit{H'_2}$ its projection over $X'_2 = \{F_0, A_0, A_1, A_2, B_0, B_1, B_2 \}$ in \autoref{fig:H_2p}. There are $9$ edges in $\mathcal{H}_2$ and only $8 = ((2)^2 + 5(2) + 2)/2$ edges in its projection $\mathcal{H'}_2$. Even with $n=2$ it is already apparent that the projection is more complex than the metagraph it should provide a simplified view of.

\begin{figure}[!htb]
\begin{center}
     
\resizebox{0.9\textwidth}{!}{
    \begin{tikzpicture}
        \node[minimum width=1cm,draw,circle] (1) at (0,3.5) {$\mathit{A_2}$};
        \node[minimum width=1cm,draw,circle] (2) at (0,1){$\mathit{B_2}$};

        \node[minimum width=1cm,draw,circle] (3) at (2.5,4) {$\mathit{C_2}$};
        \node (4) at (2.5,2.5) {$\mathit{D_2}$};
        \node[minimum width=1cm,draw,circle] (5) at (2.5,1){$\mathit{E_2}$};
        \node[ellipse, minimum width = 1.8cm, minimum height = 3cm, draw] (9) at (2.5,3.5){};
        \node[ellipse, minimum width = 1.8cm, minimum height = 3cm, draw] (10) at (2.5,1.5){};
        \node[minimum width=1cm,draw,circle] (6) at (5,3.5) {$\mathit{A_1}$};
        \node[minimum width=1cm,draw,circle] (7) at (5,1.5){$\mathit{B_1}$};

        \node[minimum width=1cm,draw,circle] (8) at (7.5,4) {$\mathit{C_1}$};
        \node (9) at (7.5,2.5) {$\mathit{D_1}$};
        \node[minimum width=1cm,draw,circle] (10) at (7.5,1){$\mathit{E_1}$};

        \node[minimum width=1cm,draw,circle] (11) at (10,3.5) {$\mathit{A_0}$};
        \node[minimum width=1cm,draw,circle] (12) at (10,1.5){$\mathit{B_0}$};
        \node[minimum width=1cm,draw,circle] (13) at (12.5,2.5){$\mathit{F_0}$};

        \node[ellipse, minimum width = 1.8cm, minimum height = 3cm, draw] (34) at (2.5,3.5){};
        \node[ellipse, minimum width = 1.8cm, minimum height = 3cm, draw] (45) at (2.5,1.5){};
        \node[ellipse, minimum width = 2cm, minimum height = 4cm, draw] (1112) at (10,2.5){};

        \node[ellipse, minimum width = 1.8cm, minimum height = 3cm, draw] (89) at (7.5,3.5){};
        \node[ellipse, minimum width = 1.8cm, minimum height = 3cm, draw] (910) at (7.5,1.5){};

        \draw[-{Latex[length=2mm]}] (1) -- (34) node [midway, above, sloped] (TextNode) {$e_{1,2}$};
        \draw[-{Latex[length=2mm]}] (2) -- (5) node [pos=.35, above, sloped] (TextNode) {$e_{3,2}$};

        \draw[-{Latex[length=2mm]}] (3) -- (6) node [midway, above, sloped] (TextNode) {$e_{2,2}$};
        \draw[-{Latex[length=2mm]}] (45) -- (7) node [pos=.3, above, sloped] (TextNode) {$e_{4,2}$};

        \draw[-{Latex[length=2mm]}] (6) -- (89) node [midway, above, sloped] (TextNode) {$e_{1,1}$};
        \draw[-{Latex[length=2mm]}] (7) -- (10) node [pos=.3, above, sloped] (TextNode) {$e_{3,1}$};

        \draw[-{Latex[length=2mm]}] (8) -- (11) node [midway, above, sloped] (TextNode) {$e_{2,1}$};
        \draw[-{Latex[length=2mm]}] (910) -- (12) node [pos=.3, above, sloped] (TextNode) {$e_{4,1}$};

        \draw[-{Latex[length=2mm]}] (1112) -- (13) node [midway, above, sloped] (TextNode) {$e_0$};
    \end{tikzpicture}
    }

\end{center}
    
    \caption{$\mathcal{H}_2$ from \autoref{complex}.}
    \label{fig:H_2}
\end{figure}

\begin{figure}[!htb]
 \begin{center}
     
\resizebox{0.9\textwidth}{!}{

    \begin{tikzpicture}

        \node[minimum width=1cm,draw,circle] (1) at (3,0) {$\mathit{A_2}$};
        \node[minimum width=1cm,draw,circle] (6) at (3,9) {$\mathit{A_1}$};
        \node[minimum width=1cm,draw,circle] (11) at (9,9) {$\mathit{A_0}$};
        \node[minimum width=1cm,draw,circle] (13) at (12,4.5) {$\mathit{F_0}$};
        \node[minimum width=1cm,circle] (12) at (6,4.5) {$\mathit{B_0}$};
        \node[minimum width=1cm,circle] (2) at (9,0) {$\mathit{B_2}$};
        \node[minimum width=1cm,circle] (7) at (0,4.5) {$\mathit{B_1}$};

        \node[ellipse, minimum width = 8cm, minimum height = 2cm, draw] (B) at (9,4.5){};
        \node[ellipse, minimum width = 8cm, minimum height = 2cm, draw] (C) at (6,0){};
        \node[ellipse, minimum width = 8cm, minimum height = 2cm, draw] (D) at (6,9){};
        \node[ellipse, rotate around={56.5:(0,0)}, minimum width = 7.4cm, minimum height = 2cm, draw] (E) at (1.5,6.75){};
        \node[ellipse, rotate around={56.5:(0,0)}, minimum width = 7.4cm, minimum height = 2cm, draw] (F) at (7.5,6.75){};
        \node[ellipse, rotate around={56.5:(0,0)}, minimum width = 7.4cm, minimum height = 2cm, draw] (G) at (4.5,2.25){};
        \node[ellipse, rotate around={123.5:(0,0)}, minimum width = 7.4cm, minimum height = 2cm, draw] (H) at (4.5,6.75){};
        \node[ellipse, rotate around={123.5:(0,0)}, minimum width = 7.4cm, minimum height = 2cm, draw] (I) at (1.5,2.25){};
        \node[ellipse, minimum width = 16cm, minimum height = 2.5cm, draw] (J) at (6,4.5){};

        \path[-{Latex[length=2mm]}] (1) edge [bend left =40] node [style={pos=.5,sloped,anchor=south,auto=false}] (TextNode) {$e'_1$}(D);
        \path[-{Latex[length=2mm]}] (C) edge [bend right =60] node [style={pos=.4,sloped,anchor=south,auto=false}] (TextNode) {$e'_2$}(J);
        \path[-{Latex[length=2mm]}] (I) edge [bend right =20] node [style={pos=.4,sloped,anchor=south,auto=false}] (TextNode) {$e'_3$}(B);
        \path[-{Latex[length=2mm]}] (G) edge [bend right =20] node [style={pos=.2,sloped,anchor=south,auto=false}] (TextNode) {$e'_4$}(13);
        \path[-{Latex[length=2mm]}] (6) edge [bend left =10] node [style={pos=.5,sloped,anchor=south,auto=false}] (TextNode) {$e'_5$}(11);
        \path[-{Latex[length=2mm]}] (E) edge [bend left =18] node [style={pos=.4,sloped,anchor=south,auto=false}] (TextNode) {$e'_6$}(B);
        \path[-{Latex[length=2mm]}] (H) edge [bend left =20] node [style={pos=.3,sloped,anchor=south,auto=false}] (TextNode) {$e'_7$}(13);
        \path[-{Latex[length=2mm]}] (F) edge [bend left =25] node [style={pos=.6,sloped,anchor=south,auto=false}] (TextNode) {$e'_8$}(13);

    \end{tikzpicture}
    }
    \end{center}
    \caption{$\mathcal{H}'_2$ from \autoref{complex}.}
    \label{fig:H_2p}

\end{figure}

Basu and Blanning did not intend for metagraph projections to look more complex than the metagraphs they are meant to simplify.  Understanding why this is the case is simply a matter of recognizing that the metagraph projection removes all transitivity from the metagraph. \citet{basu2007metagraphs} say that \enquote{there can be no simple paths of any length between two elements unless there is an edge in the projection connecting them}, but do not deduce that the removal of any transitivity means all pairwise relationships between elements in the projection must be explicitly represented with an edge.

The function for extending paths at each stage of this closure is  not associative as claimed on page 22 of \citet{basu2007metagraphs}, because  $\gamma(R) = Trunc(Cat(\gamma(a_{ik})_n,\gamma(b_{kj})_m))$ is not associative. This does not cause a problem in the current implementation as it only ever extends simple-paths one edge at a time, but associativity cannot be assumed in any proof.

While some form of transitive closure is needed to fill the gaps in the metagraph projection caused by excluding certain generating elements, it may be possible to take a lighter approach for the rest of the projection. With this idea in mind, we introduce the transitivity preserving projection (TPP) in the next section.

\section{Transitivity Preserving Projection}
When utilizing graphs as models, it is implicit that the connections between vertices possess a transitive nature. This transitivity is a fundamental feature of graph structures that allows us to reason about them. However, the metagraph projection presented in \citep{basu2007metagraphs} flattens all pairwise relationships between vertices in the metagraph into a single edge for each vertex pair. This flattening process can obscure relationships that rely on the transitivity property, such as determining if paths share a vertex. The consequence of this 'flattening' is the quadratic growth in edges shown in \autoref{complex}. To ensure clarity and minimize the number of edges, denoted as $\mathit{|E'|}$, we maintain transitivity within elements of $\mathit{X'}$, avoiding explicit pairwise relationships that can be inferred through transitivity. However, it is necessary to incorporate a form of transitive closure or minimum equivalent metagraph to capture the relationships involving elements of $\mathit{X'}$ resulting from edges between elements in $\mathit{X\backslash X'}$. 


Our replacement algorithms are based on a direct search of the power set of the edges of a graph for metapaths using a breadth first approach. 

\subsection{Transitivity preserving projection.}
The following definition is our work and offers better time complexity than Definition~\ref{def:projection}.  There is discussion of its properties elsewhere in our work.
\begin{definition}[Transitivity
        preserving projection (TPP)]
    \label{def:tpp}
    Given a metagraph $\mathcal{H}=\langle\mathit{X,E}\rangle$ and $\mathit{X'}\subseteq\mathit{X}$, a metagraph $\mathcal{H'}=\langle\mathit{X',E'}\rangle$ is a transitivity preserving projection (TPP) of $\mathcal{H}$ over $\mathit{X'}$ if:
    \begin{enumerate}
        \item For any $\mathit{e'}=\langle\mathit{V',W'}\rangle\in\mathit{E'}$ and for any $\mathit{x'}\in\mathit{W'}$ there is a dominant metapath $\mathit{M(V',\{x'\})}$ in $\mathcal{H}$ with the property that no proper subset of the edges of $\mathit{M(V',\{x'\})}$ form a dominant metapath $\mathit{M(U,\{x'\})}$ in $\mathcal{H}$ with $\mathit{U}\subseteq\mathit{X'}$ and
        \item For every $\mathit{x'\in X'}$, if there is any dominant metapath $\mathit{M(V,\{x'\})}$ in $\mathcal{H}$ with $\mathit{V}\subseteq\mathit{X'}$ and no proper subset of the edges of $\mathit{M(V,\{x'\})}$ form a dominant metapath $\mathit{M(U,\{x'\})}$ in $\mathcal{H}$ with $\mathit{U}\subseteq\mathit{X'}$, then there is an edge $\langle\mathit{V',W'}\rangle\in\mathit{E'}$ such that $\mathit{V'=V}$ and $\mathit{x'}\in\mathit{W'}$, and
        \item No two edges in $\mathit{E'}$ have the same invertex.
    \end{enumerate}

\end{definition}

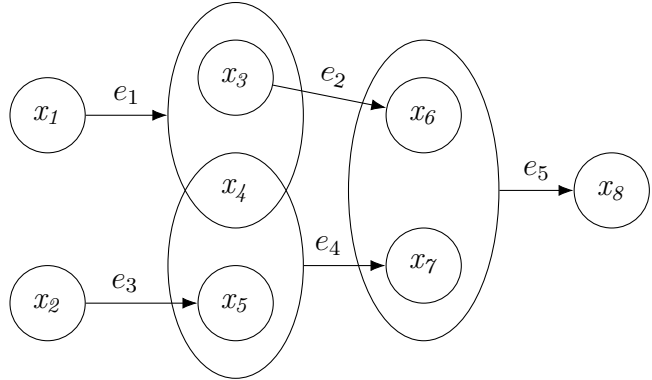
\begin{figure}[!htb]
    \centering
    \begin{tikzpicture}
        \node[minimum width=1cm,draw,circle] (1) at (0,3.5) {$\mathit{x_1}$};
        \node[minimum width=1cm,draw,circle] (2) at (0,1){$\mathit{x_2}$};

        \node[minimum width=1cm,draw,circle] (3) at (2.5,4) {$\mathit{x_3}$};
        \node (4) at (2.5,2.5) {$\mathit{x_4}$};
        \node[minimum width=1cm,draw,circle] (5) at (2.5,1){$\mathit{x_5}$};

        \node[ellipse, minimum width = 1.8cm, minimum height = 3cm, draw] (3_4) at (2.5,3.5){};

        \node[ellipse, minimum width = 1.8cm, minimum height = 3cm, draw] (4_5) at (2.5,1.5){};

        \node[minimum width=1cm,draw,circle] (6) at (5,3.5) {$\mathit{x_6}$};
        \node[minimum width=1cm,draw,circle] (7) at (5,1.5){$\mathit{x_7}$};
        \node[ellipse, minimum width = 2cm, minimum height = 4cm, draw] (6_7) at (5,2.5){};
        \node[minimum width=1cm,draw,circle] (8) at (7.5,2.5){$\mathit{x_8}$};
        \draw[-{Latex[length=2mm]}] (1) -- (3_4) node [midway, above, sloped] (TextNode) {$e_1$};
        \draw[-{Latex[length=2mm]}] (2) -- (5) node [pos=.35, above, sloped] (TextNode) {$e_3$};

        \draw[-{Latex[length=2mm]}] (3) -- (6) node [midway, above, sloped] (TextNode) {$e_2$};
        \draw[-{Latex[length=2mm]}] (4_5) -- (7) node [pos=.3, above, sloped] (TextNode) {$e_4$};
        \draw[-{Latex[length=2mm]}] (6_7) -- (8) node [midway, above, sloped] (TextNode) {$e_5$};
    \end{tikzpicture}
    \caption{Example isomorphic to that in \citep{blanning1997} and \citep{basu2007metagraphs}.}
    \label{fig:tppmotivation}
\end{figure}

We note that the TPP is the result of a filtering of the dominant paths used in the construction of the metagraph projection. It is this and the uniqueness of the metagraph projection which leads to the uniqueness of the TPP. However, for completeness we prove that uniqueness in \autoref{unique}.

\begin{theorem}\label{idempotent}
    The transitivity preserving projection is idempotent.
\end{theorem}

\begin{proof}
    Let $\mathcal{H'}=\langle\mathit{X',E'}\rangle$ be a projection of $\mathcal{H}=\langle\mathit{X,E}\rangle$ over $\mathit{X'}\subseteq\mathit{X}$ then each of the edges in $\mathcal{H'}$ are dominant by construction due to the fact that $\mathit{e'}=\langle\mathit{V',W'}\rangle\in\mathit{E'}$ is the reduction of a set of dominant metapaths $\mathit{M(V',\{x'\})}$ where $\mathit{x' \in W'}$.

    Now consider $\mathcal{H''}=\langle\mathit{X',E''}\rangle$ the transitivity preserving projection of $\mathcal{H'}=\langle\mathit{X',E'}\rangle$ over $\mathit{X'}$ then every dominant metapath $\mathit{M(V,\{x'\})}$ in $\mathcal{H''}$ with $\mathit{V}\subseteq\mathit{X'}$ that has 2 or more edges will have a proper subset of edges that form a dominant metapath $\mathit{M(U,\{x'\})}$ in $\mathcal{H'}$ with $\mathit{U}\subseteq\mathit{X'}$ (because every edge in $\mathcal{H'}$ is a dominant metapath in $\mathcal{H'}$). Thus, dominant metapaths with 2 or more edges are excluded from consideration in the transitivity preserving projection. Leaving only dominant edges in $\mathcal{H'}$  to be considered as metapaths in the projection (since all the edges in $\mathcal{H'}$ are dominant and satisfy the proper subset condition they are all represented in $\mathcal{H''}$).
\end{proof}

\begin{theorem}\label{forhung}
    The transitivity preserving projection is a projection in the mathematical sense.
\end{theorem}
\begin{proof}
    The transitivity preserving projection is a mapping from the set of possible directed edges between the vertexes selected from the power set of the generating set of $\mathcal{H'}$ to the set of possible directed edges between the vertexes selected from the power set of the generating set of $\mathcal{H'}$ (so its domain and codomain are the same mathematical structure) and by \autoref{idempotent} it is an idempotent mapping.
\end{proof}

\begin{theorem}\label{unique}
    The transitivity preserving projection of a metagraph over a subset of its generating set is unique and the composition of its edges is unique.
\end{theorem}
\begin{proof}
    By contradiction, let $\mathcal{H}=\langle\mathit{X,E}\rangle$ and $\mathcal{H'}=\langle\mathit{X',E'}\rangle$, $\mathcal{H''}=\langle\mathit{X',E''}\rangle$ be transitivity preserving projections of $\mathcal{H}$ over $\mathit{X'}$ with $\mathit{X'}\subseteq\mathit{X}$ and $\mathit{H' \neq H''}$.

    Without loss of generality since $\mathit{H' \neq H''}$ then $\mathit{\exists \: e' = \langle V', W' \rangle \in E'}$ such that $\mathit{e' \notin E''}$. Because $\mathcal{H'}$ is a transitivity preserving projection and $\mathit{e' \in E'}$ there exists a dominant metapath in $\mathcal{H}$ from $\mathit{V'}$ to $\mathit{W'}$ with the property that for each $\mathit{x \in W'}$ no proper subset of the edges of $\mathit{M(V',\{x'\})}$ form a dominant metapath $\mathit{M(U,\{x'\})}$ in $\mathcal{H}$ with $\mathit{U}\subseteq\mathit{X'}$.

    Since $\mathit{e' \notin E''}$ and $\mathcal{H''}$ is also a transitivity preserving projection of $\mathcal{H}$ then (by the definition of dominance) there must be an edge $\mathit{e'' = \langle V'', W'' \rangle \notin E''}$ with either $\mathit{V'\subset V''}$ and $\mathit{W'\subseteq W''}$ or $\mathit{V'\subseteq V''}$ and $\mathit{W'\subset W''}$. However, if $\mathit{V'\subset V''}$ then $\mathit{e''}$ violates condition 1 of Definition \ref{def:tpp} for all $\mathit{x \in W'}$. If $\mathit{V' = V''}$ and $\mathit{W'\subset W''}$ then $\mathit{e'}$ violates condition 3 of Definition \ref{def:tpp}. Thus, we must have $\mathit{e'=e''}$ a contradiction that proves the result.

    For any $\mathit{e' = \langle V', W' \rangle \in E'}$ then there must be at least one dominant metapath (with the correct subset condition on edges) $\mathit{M(V', W')}$ in $\mathcal{H}$ from $\mathit{V' \subseteq X'}$ to $\mathit{W' \subseteq X'}$. Now consider the set of such dominant metapaths $\mathit{\{M_1(V', W'), M_2(V', W'), M_3(V', W'), \ldots \}}$. Since there can only be one such set then $\mathit{C(e')}$ is unique and for each $\mathit{e'}$, $\mathit{C(e')}$ is unique.
\end{proof}

Here, we define the property of irreducibility for a metapath and establish its relationship to the Transitivity Preserving Projection (TPP). In  \autoref{thm:tpp_edges_dom_irred}, we prove that every edge in the TPP corresponds to a dominant and irreducible metapath in \(\mathcal{H}\), and conversely, every dominant and irreducible metapath in \(\mathcal{H}\) is represented as an edge in the TPP. Subsequently, in  \autoref{thm:factorisation} , we demonstrate that any dominant metapath in \(\mathcal{H}\) can be factored into a sequence of dominant irreducible factors, all of which are included in the TPP. This establishes that the TPP captures precisely all dominant metapaths in \(\mathcal{H}\).

\autoref{thm:TPP_minimality} then indicates that the TPP contains the minimum set of edges needed to preserve all dominant  relationships in \(\mathcal{H}\) restricted to \(X'\).


\begin{definition}[Composition of Paths (\(\circ\))]
\label{def:composition_simple}
Let \(\mathcal{H} = \langle X, E \rangle\) be a metagraph, where \(X\) is the generating set of vertices, 
and \(E\) is the set of edges. Let \(M(A, Z)\) and \(M(Z, \{x\})\) be two metapaths in \(\mathcal{H}\) such that 
the target set of \(M(A, Z)\) matches the source set of \(M(Z, \{x\})\), i.e., \(Z\).

The composition of \(M(A, Z)\) and \(M(Z, \{x\}), \) denoted as:
\[
   M(A, \{x\}) = M(A, Z) \circ M(Z, \{x\}), 
\]
is the path with an edge set defined as:
\[
   E(M(A, \{x\})) = E(M(A, Z)) \cup E(M(Z, \{x\})).
\]
\end{definition}

\begin{definition}[Factorization of a Path]
\label{def:factorization_simplified}
Let \(\mathcal{H} = \langle X, E \rangle\) be a metagraph, where \(X\) is the generating set of vertices, 
and \(E\) is the set of edges. A path \(M(A, \{x\})\) in 
\(\mathcal{H}\) is \emph{factorizable over \(X\)} if there exists a set \(Z \subseteq X\) such that:
\[
   M(A, \{x\}) = M(A, Z) \circ M(Z, \{x\}), 
\]
where:
\begin{enumerate}
    \item \(E(M(A, Z)) \cup E(M(Z, \{x\})) = E(M(A, \{x\}))\), 
    \item \(E(M(A, Z)) \cap E(M(Z, \{x\})) = \emptyset\), 
    \item \(Z\) consists of the \emph{pure inputs} required to produce \(x\), ensuring that 
    \(M(Z, \{x\})\) is minimal with respect to its inputs.
\end{enumerate}
\end{definition}

\begin{definition}[Irreducible Path over \(X'\)]
\label{def:irreducible_refined_final} Let \(\mathcal{H} = \langle X, E \rangle\) be a metagraph, where \(X\) is the generating set of vertices, 
\(E\) is the set of edges, and \(X' \subseteq X\) is a subset of the generating set.
A path \(M(A, \{x'\})\) with $A \subseteq X'\setminus \{x'\}, x' \in X'$, in a metagraph \(\mathcal{H}\) is \emph{irreducible over \(X'\)} if it cannot be factored further within \(X'\). Specifically, there does not exist a set \(Z \subseteq X'\) such that:
\[
   M(A, \{x\}) = M(A, Z) \circ M(Z, \{x\}), 
\]
where \(M(A, Z)\) and \(M(Z, \{x\})\) satisfy the conditions of a valid factorization over \(X\). Note that it is possible that a path can be factored over \(X\) but not over \(X'\) because we must have \(Z \subseteq X'\) and that \( M(Z, \{x\})\) might not be unique.
\end{definition}

\autoref{fig:not_unique} is an example of a metagraph that leads to a factorization which is not unique. Here we have two choices for \(Z\) either \(Z= \{E, D\}\) or \(Z= \{B, F\}\).

\begin{figure}[htbp]
   \centering
   \begin{tikzpicture}[
            scale = 1, transform shape,
   node distance =  11mm,
      res/.style = {ellipse, draw, minimum height=0.5cm, minimum width=0.8cm,
                     font=\footnotesize },
   literal/.style = {rectangle, draw, minimum height=0.5cm, text width=1.2 cm,
                     align=center, font=\footnotesize},
   FIT/.style args = {#1/#2/#3}{draw, ellipse, inner xsep=#2, fit=#3},
   every edge quotes/.style = { inner sep=1pt, font=\footnotesize},
                           ]
      \node[draw, circle, fill = lightgray] (G) at (-1,0) {G};
      \node[draw, circle, fill=lightgray] (E) at (-1,1.5) {E};
      \node[draw, circle, fill=lightgray] (H) at (1,0)  {H};
      \node[draw, circle, fill=lightgray] (F) at (1,1.5)  {F};

      \node[draw, circle, fill=lightgray] (B) at (-1.5,3.5) {B};
      \node[draw, circle] (C) at (0,3.5)    {C};
      \node[draw, circle, fill=lightgray] (D) at (1.5,3.5)  {D};

      \node[draw, circle, fill=lightgray] (A) at (0, 6) {A};

      \node[FIT=violet!30/-2mm/(B)(C)] (BC) {};

      \node[FIT=violet!30/-2mm/(C)(D)] (CD) {};
      \node[FIT=violet!30/-5.5mm/(BC)(CD)] (BCD) {};
      \draw[->] (G) -- (E);

      \draw[->] (H) -- (F);

      \draw[->] (E) -- (BC);

      \draw[->] (F) -- (CD);

      \draw[->] (BCD) -- (A);

   \end{tikzpicture}

   \caption{Metagraph that does not have a unique factorization over $X'={A, B, D, E, F, G, H}$.}
   \label{fig:not_unique}
\end{figure}
\begin{theorem}
\label{thm:tpp_edges_dom_irred}
Let $\mathcal{H} = \langle X, E\rangle$ be a metagraph, and let
$\mathcal{H'} = \langle X', E' \rangle$ be a \emph{transitivity preserving projection (TPP)} 
of $\mathcal{H}$ over $X' \subseteq X$. 

\textbf{(i)} If $M(A, \{x'\})$ in $\mathcal{H}$ is dominant and irreducible over $X'$, with $A \subseteq X'\setminus \{x'\}$ and $x' \in X'$, then there is an edge $\langle A, W'\rangle \in E'$ such that $x' \in W'$. In other words, every dominant, irreducible path in $\mathcal{H}$ with source and target in $X'$ is captured by a single edge in $\mathcal{H'}$.

\textbf{(ii)} If $\langle V', W'\rangle \in E'$ in $\mathcal{H'}$, then for each $x'' \in W'$, the path $M(V', \{x''\})$ in $\mathcal{H}$ is dominant and irreducible over $X'$. Hence each edge in $\mathcal{H'}$ corresponds exactly to a set of dominant, irreducible paths.
\end{theorem}

\begin{proof}
\textbf{(i)} Assume $M(A, \{x'\})$ is dominant and irreducible over $X'$, where $x' \in X'$ and $A \subseteq X'\setminus\{x'\}$. 
Dominance means $A$ is minimal for $x'$, and irreducibility over $X'$ means there is no factorization such that:
\[
   M(A, \{x'\}) = M(A, Z)\circ M(Z, \{x'\})
   \quad\text{with }Z\subseteq X'.
\]
By the second condition of the TPP definition, if there is a dominant path $M(V, \{y\})$ in $\mathcal{H}$ with $V \subseteq X'$ that cannot be further reduced in $X'$, then $\mathcal{H'}$ has an edge $\langle V, W''\rangle\in E'$ with $y\in W''$. Substituting $V=A$ and $y=x'$, there must be an edge $\langle A, W'\rangle\in E'$ for some $W'\subseteq X'$ satisfying $x' \in W'$. Therefore each dominant, irreducible path in $\mathcal{H}$ over $X'$ is captured by a single edge.

\textbf{(ii)} Conversely, let $\langle V', W'\rangle \in E'$. By the first condition of the TPP definition, for any $x'' \in W'$, there is a dominant path $M(V', \{x''\})$ in $\mathcal{H}$ no smaller subset of edges can replicate within $X'$. Thus $M(V', \{x''\})$ cannot be factored further in $X'$, so it is irreducible, and $V'\subseteq X'$ is minimal for $x''$, so it is dominant. Hence each edge $\langle V', W'\rangle$ corresponds exactly to dominant, irreducible paths $M(V', \{x''\})$ for $x''\in W'$.

Therefore $\mathcal{H'}$ contains exactly the edges representing dominant, irreducible paths from $\mathcal{H}$ restricted to $X'$, and no additional relationships appear in $E'$.
\end{proof}

\begin{lemma}
\label{lem:dominant_factors_formal}
Let $\mathcal{H} = \langle X, E \rangle$ be a metagraph with generating set $X$ and edge set $E.$ 
Suppose $M(A, \{x\})$ is a dominant metapath in $\mathcal{H}$ and can be factored as
\[
M(A, \{x\}) \;=\; M(A, Z) \;\circ\; M(Z, \{x\}).
\]
Then both $M(A, Z)$ and $M(Z, \{x\})$ are also dominant.
\end{lemma}

\begin{proof}
Consider the factorization $M(A, \{x\}) = M(A, Z) \circ M(Z, \{x\}).$ 
First, to show that $M(Z, \{x\})$ is dominant, assume for contradiction that it is not dominant. 
If $M(Z, \{x\})$ is not input dominant, there exists a proper subset $Z' \subset Z$ giving a valid 
metapath $M(Z', \{x\})$ whose edges are taken from $M(Z, \{x\}).$ Because $Z$ is the minimal set of 
inputs for producing $x$, replacing $Z$ by $Z'$ removes at least one edge from $M(Z, \{x\}), $ 
contradicting the edge dominance of $M(A, \{x\})$ since then $M(A, Z') \circ M(Z', \{x\})$ becomes 
valid but lacks that edge. If instead $M(Z, \{x\})$ is not edge dominant, there is some edge in 
$M(Z, x)$ that can be removed while still covering $\{x\}, $ again contradicting the edge dominance 
of $M(A, \{x\}).$ Hence $M(Z, \{x\})$ must be dominant.

A similar argument applies to $M(A, Z).$ Suppose $M(A, Z)$ is not dominant. If it is not input 
dominant, then there is a smaller subset $A' \subset A$ giving a valid path $M(A', Z).$ This 
contradicts the input dominance of $M(A, \{x\})$ because $M(A', Z) \circ M(Z, \{x\})$ would be 
a valid metapath using the same overall edge set but a strictly smaller source $A'.$ If $M(A, Z)$ 
is not edge dominant, one can remove an edge from $M(A, Z)$ without destroying coverage of $Z, $ 
which contradicts the edge dominance of $M(A, \{x\}).$ Thus $M(A, Z)$ is also dominant.
\end{proof}

\begin{theorem}
\label{thm:factorisation}
Let \(\mathcal{H} = \langle X, E\rangle\) be a metagraph, where \(X\) is the generating set of vertices, and let \(X' \subseteq X\). Let \(\mathcal{H'} = \langle X', E'\rangle\) be a Transitivity Preserving Projection (TPP) of \(\mathcal{H}\) over \(X'\). Suppose \(M(A,\{x\})\) in \(\mathcal{H}\) is a dominant metapath with \(A \subseteq X'\setminus\{x\},  x \in X'\). Then \(M(A,\{x\})\) can be decomposed into a finite set of dominant, irreducible sub-metapaths over \(X'\), each corresponding to an edge in \(\mathcal{H'}\). Consequently, every dominant metapath in \(\mathcal{H}\) whose support is contained in  X' is fully represented in \(\mathcal{H'}\) by its irreducible factors.
\end{theorem}

\begin{proof} 
By recursive decomposition.\\
\begin{enumerate}
    \item \textbf{Base Case.}
If \(M(A,\{x\})\) is irreducible over \(X'\), no factorization is required. By Condition~(2) of the TPP, there exists an edge \(\langle A, W'\rangle \in E'\) in \(\mathcal{H'}\) such that \(x \in W'\), representing \(M(A,\{x\})\). The process terminates for this path.

\item \textbf{Factorization.} 
If \(M(A,\{x\})\) is not irreducible over \(X'\), choose \(Z \subseteq X'\) such that
\[
   M(A,\{x\}) = M(A,Z) \circ M(Z,\{x\}),
\]
and \(M(Z,\{x\})\) is irreducible. By  \autoref{lem:dominant_factors_formal}, both \(M(A,Z)\) and \(M(Z,\{x\})\) remain dominant. Since \(M(Z,\{x\})\) is irreducible and dominant, Condition~(2) of the TPP ensures there is an edge \(\langle Z, W'\rangle \in E'\) with \(x \in W'\), representing \(M(Z,\{x\})\).

\item \textbf{Decomposition.}
Let \(Z = \{z_1, z_2, \dots, z_m\}\). For each \(z_i \in Z\), define
\begin{align*}
  \mathcal{M}(A, z_i)
  =
  \Bigl\{
    &M(A_j, \{z_i\})_k
    \;\Bigm|\;
    A_j \subseteq A,\ 
    M(A_j, \{z_i\})_k \text{ is dominant},\\
    &E\bigl(M(A_j, \{z_i\})_k\bigr)
    \subseteq
    E(M(A, Z))
  \Bigr\},
\end{align*}


where \(M(A_j, \{z_i\})_k\) denotes the \(k\)-th dominant path from the subset \(A_j \subseteq A\) to \(z_i\). Then
\[
  E\bigl(M(A,Z)\bigr)
  =
  \bigcup_{\,z_i \in Z}
    \;\bigcup_{\,M(A_j, \{z_i\})_k \,\in\, \mathcal{M}(A,z_i)}
      E\bigl(M(A_j,\{z_i\})_k\bigr).
\]

\item \textbf{Recursive Step.}  
For each \(M(A_j,\{z_i\})_k \in \mathcal{M}(A,z_i)\), recursively apply Steps~1–4 to factor \(M(A_j,\{z_i\})_k\). Each factorization step reduces the edge set, ensuring termination since \(X'\) is finite.
\end{enumerate}

Since each factorization step reduces the relevant edge set, and \(X'\) is finite, the recursion terminates in finitely many steps. Every branch of the decomposition ends in irreducible, dominant sub-metapaths, each represented by an edge in \(\mathcal{H'}\) via Condition~(2). Consequently, \(M(A,\{x\})\) decomposes into a finite set of irreducible components, ensuring that every dominant metapath in \(\mathcal{H}\) restricted to \(X'\) is captured by the TPP.
\end{proof}

\begin{theorem}[Minimality of the TPP]
\label{thm:TPP_minimality}
Let \(\mathcal{H} = \langle X, E \rangle\) be a metagraph, where \(X\) is the generating set of vertices, 
and let \(X' \subseteq X\). Let \(\mathcal{H'} = \langle X', E' \rangle\) be a 
Transitivity Preserving Projection (TPP) of \(\mathcal{H}\) over \(X'\). Suppose that there is another 
projection \(\mathcal{H''} = \langle X', E''\rangle\) of \(\mathcal{H}\) over \(X'\) that encodes 
all dominant metapaths whose source and target lie in \(X'\). Then \(\lvert E' \rvert \le \lvert E'' \rvert\). 
Consequently, \(\mathcal{H'}\) contains the minimum set of edges needed to  encode 
all dominant metapaths whose source and target lie in \(X'\). 
\end{theorem}

\begin{proof}
Suppose, for contradiction, that there exists another projection 
\(\mathcal{H''} = \langle X', E''\rangle\) of \(\mathcal{H}\) over \(X'\) 
such that \(\lvert E''\rvert < \lvert E'\rvert\) and \(\mathcal{H''}\) 
nevertheless encodes every dominant metapath with source and target in \(X'\).

From the definition of a TPP, each edge \(\langle V', W'\rangle\in E'\) 
corresponds to a metapath \(M(V', \{x'\})\) in \(\mathcal{H}\) that is both 
dominant and irreducible over \(X'\). No proper subset of its edges forms 
another dominant path to \(x'\) within \(X'\), so \(M(V', \{x'\})\) cannot 
be covered by any smaller edge set. Because \(\mathcal{H'}\) has strictly 
more edges than \(\mathcal{H''}\) by assumption, at least one edge of 
\(\mathcal{H'}\) is missing in \(\mathcal{H''}\). Let 
\(\langle U, W^*\rangle\in E'\setminus E''\). By the TPP property, 
\(\langle U, W^*\rangle\) represents a dominant, irreducible metapath 
\(M(U, \{y\})\) with \(y \in W^*\).

Because \(M(U, \{y\})\) is irreducible over \(X'\), no smaller set of edges 
produces a path from \(U\) to \(y\) while retaining dominance, so its removal 
discards coverage of at least one irreducible, dominant path. The projection 
\(\mathcal{H''}\), having fewer edges, omits \(\langle U, W^*\rangle\). 
Thus \(\mathcal{H''}\) fails to represent the dominant path \(M(U, \{y\})\). 
This contradicts the assumption that \(\mathcal{H''}\) encodes every dominant 
metapath among vertices in \(X'\).

Hence no projection with fewer edges than \(\mathcal{H'}\) can encode all 
dominant metapaths from \(\mathcal{H}\) restricted to \(X'\). Therefore 
\(\lvert E'\rvert \le \lvert E''\rvert\) for any other projection 
\(\mathcal{H''} = \langle X', E''\rangle\). This proves that 
\(\mathcal{H'}\) is minimal among all projections over \(X'\).
\end{proof}

\section{Transitivity preserving projection algorithms}

The outline of a search for a metapath between source set $s$ and a target set $t$ is simply a matter of setting $B=s$ and repeatedly calling the subset function of $S$ passing it the set $B$. At each call the outvertices of the edges returned are added to $B$ and the process continues until $t \subseteq B$. There are many details to consider to effectively find all the paths between $s$ and $t$ in this manner, but first we will consider a simpler case.


\subsection{Finding a single path.} \label{sec:fsp}
Frequently any path that connects source set $s$ and target set $t$ is sufficient, i.e.\ as proof of connectivity. Unfortunately the algorithms due to \citet{basu2007metagraphs} do not consider this simple case, every path is calculated using simple-paths and the closure $A^*$ of the adjacency matrix $A$. Using the direct approach discussed above will provide a path in $O(m^2)$ time (where $m$ is the number of edges in the metagraph). Its implementation is shown in  Algorithm~\ref{alg:GetSingleMetapathFrom}. As with all our algorithms it returns a list of references to edges that form the path.

 Between lines \ref{lst1:line:begin_forward} and \ref{lst1:line:end_forward} the algorithm starts at the $source$ and selects candidate edges to add to the $path$ on the basis that the candidate edge invertex set is in a set formed by the generating set elements in the source and the set of elements that make up the outvertices of the edges already added to the path. If a candidate edge is already in the path or the whole of its outvertex set is in the set made of the elements of the outvertices of the edges in the path it is rejected, otherwise the candidate edge is appended to the path.

 At this point of the algorithm \textit{path} contains a path from \textit{source} to \textit{target}, but it can contain many redundant edges. Thus, lines \ref{lst1:line:begin_backward} to \ref{lst1:line:end_backward} are concerned with removing the redundant edges. The process is to work backwards from the \textit{target} building a \textit{new\_path} using only those edges from \textit{path} that have outvertex sets that are not disjoint from a set \textit{required}, formed by starting with the \textit{target} and as each edge \textit{e} is added to the \textit{new\_path} by first deleting any elements in \textit{required} that are contained within \textit{outvertex}$(e)$ and then setting \textit{required} to be the union of \textit{required} and \textit{invertex}$(e)$. This ensures that ensuring every edge added to \textit{new\_path} is on a path that can reach some part of the \textit{target}.

 \begin{algorithm}[H]
     \caption{Find any metapath between a source set and a target set}\label{alg:GetSingleMetapathFrom}
     \begin{algorithmic}[1]
         \Function{GetSingleMetapathFrom}{E, source, target}
         \State $path \gets \text{empty list}$
         \State $B \gets \text{set containing elements of } \textit{source}$

         \State $\text{S} \gets \text{new SetTrieMultiMap}$
         \ForAll{$e_i \in E$}
         \State add $\text{invertex}(e_i)$ to $\text{S}$ with value $i$
         \EndFor

         \State $updated \gets \text{true}$

         \While{$updated$ and $target$ is not subset of $B$} \label{lst1:line:begin_forward}
         \State $updated \gets \text{false}$
         \State $I  \gets \text{list of } i \text{ mapped to subsets of } B  \text{ in }S$

         \ForAll{$i \in I$}
          \If{$i \in path \text{ or outvertex}(e_i) \subseteq B$}
         \State \textbf{continue}
         \EndIf

         \State $updated \gets \text{true}$
         \State $B \gets B \cup\text{outvertex}(e_i)$

         \State append $i$ to $path$
         \EndFor
         \EndWhile

         \If{$target \nsubseteq B$}
        \State \textbf{return} empty list \label{lst1:line:end_forward}
         \EndIf
         \State $new\_path \gets \text{empty list}$ \label{lst1:line:begin_backward}
         \State $required \gets \text{set containing elements of } \textit{target}$

         \ForAll{$i \in \text{reversed}(path)$}
         \If{$\text{outvertex}(e_i)\cap required \neq \emptyset$}
         \State append $i$ to $new\_path$
         \State $required \gets required \backslash \text{outvertex}(e_i)$
         \State $required \gets required \cup \text{invertex}(e_i)$
         \EndIf
         \EndFor

         \State reverse $new\_path$
         \State \textbf{return} $new\_path$ \label{lst1:line:end_backward}
         \EndFunction
     \end{algorithmic}
 \end{algorithm}
In order for an exhaustive search for metapaths to be efficient, we need to keep the search space as small as possible and in this case that means reducing the set of edges considered to a minimum. This presents us with a design decision. The algorithm for finding paths in \citet{basu2007metagraphs}  will find paths  between a given \textit{source} set of elements and a \textit{target} set of elements that exhibit the property that a proper subset of the edges can form a path between the source and the target. Note, however that such a path may dominate the path made from a subset of its edges because it may use  a smaller set of elements of the \textit{source} as its inputs.
The property can be thought of as the path being a walk that leaves the source and then revisits the source before arriving at the target.

\subsection{Minimizing the search space}

Our primary uses for finding all the metapaths in a metagraph are to analyze redundancy such as cut-set calculations and to calculate projections. By definition these paths do nothing to add to redundancy between  \textit{source} and \textit{target} sets, and we have created an improved form of projection in Definition \ref{def:tpp} which mandates filtering of these paths. Thus, the decision was made to exclude this type of path from our search. This enables an algorithmic simplification which is explained in Section~\ref{sec:fam}.  This design decision allows us to restrict the search space and Algorithm~\ref{alg:GetSuperpath} is the implementation that reduces the search space.

Between lines \ref{lst2:line:begin_forward} and \ref{lst2:line:end_forward} the algorithm starts at the $source$ and adds any edge the $path$ if its invertex set is contained in a set formed by the generating set elements in the \textit{source} and the set of elements that make up the outvertices of the edges already added to the path.

On completion of this part of the algorithm, \textit{path} contains every path that originates from the \textit{source} set and this may include paths that do not reach the \textit{target}. Thus, lines \ref{lst2:line:begin_backward} to \ref{lst2:line:end_backward} are concerned with removing edges that do not lie on any path between the \textit{source} and \textit{target}. The method is to work backwards from the \textit{target} building a \textit{new\_path} using only those edges from \textit{path} that have outvertex sets that are not disjoint from a set \textit{required}, formed by starting with the \textit{target} and as each edge \textit{e} is added to the \textit{new\_path} by setting \textit{required} to be the union of \textit{required} and \textit{invertex}$(e)$, thus ensuring every edge added to \textit{new\_path} is on a path that can reach some part of the \textit{target}.

\begin{algorithm}[H]
    \caption{Get all edges that lie on any path between source and target}\label{alg:GetSuperpath}
    \begin{algorithmic}[1]
        \Function{GetSuperpath}{E, source, target}

        \State $path \gets \text{empty list}$
        \State $B \gets \text{set containing elements of } \textit{source}$

        \State $\text{S} \gets \text{new SetTrieMultiMap}$
        \ForAll{$e_i \in E$}
        \State add $\text{invertex}(e_i)$ to $\text{S}$ with value $i$
        \EndFor

        \State $updated \gets \text{true}$
        \While{$updated$} \label{lst2:line:begin_forward}
        \State $updated \gets \text{false}$
        \State $I  \gets \text{list of } i \text{ mapped to subsets of } B  \text{ in }S$

        \ForAll{$i \in I$}
        \If{$i \in path$}
        \State \textbf{continue}
        \EndIf

        \State $B \gets B \cup \text{outvertex}(e_{\text{i}})$
        \State \text{append} $i$ \text{to} $path$
        \State $updated \gets \text{true}$
        \EndFor
        \EndWhile \label{lst2:line:end_forward}

        \State $new\_path \gets$ \text{empty list}
        \State $required \gets$ \text{set containing elements of} $target$
        \State $updated \gets \text{true}$
        \While{$updated$}
        \State $updated \gets \text{false}$
        \ForAll{$i \in \text{reversed}(path)$} \label{lst2:line:begin_backward}
        \If{$i \in new\_path$}
        \State \textbf{continue}
        \EndIf
        \If{$\text{outvertex}(e_i)\cap required \neq \emptyset$}
        \State \text{append} $i$ \text{to} $new\_path$
        \State $required \gets required \cup \text{invertex}(e_{i})$
        \State $updated \gets \text{true}$
        \EndIf
        \EndFor
        \EndWhile
        \State \textbf{return} $new\_path$ \label{lst2:line:end_backward}
        \EndFunction
    \end{algorithmic}
\end{algorithm}

\subsection{Finding every edge-dominant metapath between source and target sets.}\label{sec:fam}
There are three important observations that we use to minimize the work involved in finding all the edge dominant paths between \textit{source} and \textit{target} sets with the proper subset property discussed in Section~\ref{sec:fsp}.
\begin{enumerate}

    \item Using a breadth first approach means the first metapath discovered between \textit{source} and \textit{target} has the least number of edges such a metapath can have, and thus, has no redundant edges. In fact subsequently discovered metapaths with sets of edges that are not a superset of the edge set of any previously discovered metapaths are also edge redundant.  This is significant because as discussed in Section~\ref{sec:fsp}, we do not wish to include metapaths that exhibit the property that a proper subset of the edges can form a metapath between the \textit{source} and the \textit{target}. By using previously found metapaths to compare against subsequently found metapaths for this proper subset property we can exclude such metapaths.  This test also has the added benefit of excluding metapaths that have redundant edges. It is a lightweight test that can be implemented efficiently with a set-trie.

    \item The order in which edges are added to a metapath does not affect the set of outvertices available for extending the metapath. This observation yields two useful ideas we can leverage:
          \begin{enumerate}
              \item  As a breadth first search progresses we may visit the same set of edges in multiple ways, but if possible it is advantageous to avoid this. The usual technique employed in breadth first searches is recursion, but recursion does not allow for the search paths to be consolidated if their intermediate results are identical. Thus, an iterative technique that allows search paths to be consolidated is employed. It is also lightweight in that rather than the function call stack keeping track of the search (as in recursive functions) a dictionary does this.
              \item  The order in which edges are visited while searching for a metapath does not affect the set of outvertices available for its extension. Thus, the only state information needed to efficiently compute if an edge may extend the path is a set containing references to the edges in the path, and a set containing the union of the elements in the source set and elements in the outvertices of the edges already part of the path. This fits well as a dictionary can map sets of references to edges to the sets of elements that can be used to extend the path.
          \end{enumerate}

    \item If multiple simple edges leave an invertex it is sometimes possible to combine them and reduce the size of the search space. In the metagraph that  Figure~\ref{fig:aircraftmetagraph} is adapted from there are 14 edges leaving vertex \textit{9999}, leading to an immediate branching with $2^{14}$ possible choices at the start of our search in an example directed hypergraph model for an air-craft landing leg~\citep{hopp2023}. Obviously this has a large impact on the size of the search space. The idea is to replace edges like this with a single edge that has an outvertex that is the union of the outvertices of the edges it replaces. However, there are caveats to which edges this can be applied to. Indeed, those caveats apply even in the case discussed since there are two paths to vertex \textit{303}. The caveat is that we must not combine edges that would leave a path in the original metapath that could not be explored. In our case we use a very conservative approach to combining edges and only combine edges that share an invertex if all the elements in their outvertices only have one incoming edge.
\end{enumerate}

Algorithm~\ref{alg:getallpaths} uses each of these observations in a carefully choreographed manner to minimize the search space. We also have a practical consideration. Our testing of the algorithms in \citet{blanning1997} often resulted in running out of memory on small problems. In part this was due to holding many paths in memory simultaneously. Taking this into account as well as the fact that many algorithms that will use these paths perform a form of filtering that can be applied one path at a time we implemented the algorithm as a generator function so that paths can be consumed as they are generated rather than kept in memory until the search is exhausted.
\subsubsection{Edge combining.}
On line \ref{lst3:lin:superpath} the algorithm calls the function described in Algorithm~\ref{alg:GetSuperpath} which reduces the set of edges to be searched and references to these edges are now stored in $E_1$.

\vspace{10mm}
\begin{breakablealgorithm}
    \caption{Finding all the edge dominant paths between a source and target}\label{alg:getallpaths}
    \begin{algorithmic}[1]
        \Function{GetAllMetaPathsFrom}{$E_0$, source, target}
        \State $E_1 \gets \text{GetSuperpath}(E_0, source, target)$ \label{lst3:lin:superpath}

        \State $S_1 \gets$ \text{empty SetTrieMultiMap} \label{lst3:lin:startcontraction}

        \ForAll {$e_i \in E_1$}
        \State add $\text{invertex}(e_i)$ to $S_1$ with value $i$
        \EndFor

        \State $H \gets \text{list}(a_j)\text{ where } a_j\in \text{outvertext}(e_i) \text{ and }e_i \in E_0$

        \State $K \gets$ \text{empty SetTrie}
        \For{$h_i \in H$}
        \State add $h_i$ to $K$ if there is more than one copy of $h_i$ in $H$
        \EndFor

        \State $Q \gets \text{list}(\text{keys}(S_1))$
        \State $E_2 \gets$ \text{list}($E_0$)
        \State $S_2 \gets$ \text{empty SetTrieMultiMap}
        \State $D \gets$ \text{empty MultiDict}
        \ForAll {$I \in Q$}
        \State $L  \gets \text{list of edge indices mapped to } I  \text{ in }S_1$
        \State $O \gets \text{list}( \text{outvertex}(e_i) \text{ where } i \in L \text{ and }e_i \in E_2)$
        \State $n \gets \text{list}(o \in O \text{ where } o \text{ has no subsets in } K)$
        \If {$|L| < 2$ \text{or} $|n| < 2$} \label{lst3:lin:combine_edge_branch}
        \ForAll{$i \in L$}
        \State $\text{add }I \text{ to }S_2 \text{ with value }i$
        \EndFor
        \Else
        \State $S_0 \gets \text{empty list}$
        \State $S_0 \gets i \in L \text{ where } e_i \in E_2 \text{ and outvertex} (e_i) \notin n$

        \ForAll{$i \in S_0$}
        \State $\text{add }I \text{ to }S_2 \text{ with value }i$
        \EndFor

        \State $T \gets i \in L \text{ where } e_i \in E_2 \text{ and outvertex}(e_i) \in n$

        \If{$|T| > 0$}
        \State $out\gets \text{empty set}$
        \ForAll{$i \in T$}
        \State $out \gets out \cup \text{outvertex}(e_i) \text{ where } e_i \in E_2$
        \EndFor
        \State $\text{append Edge}(I, \text{out})\text{ to }  E_2$
        \State $\text{new\_edge\_pos} \gets |E_2| - 1$
        \State $\text{add }I \text{ to } S_2\text{ with value new\_edge\_pos}$

        \ForAll{$i \in T$}
        \State $\text{add new\_edge\_pos to }D \text{ with value }i$ \label{lst3:lin:endcontraction}

        \EndFor
        \EndIf
        \EndIf
        \EndFor

        \State $M \gets$ \text{new SetTrie}
        \State $\text{prev\_paths} \gets$ \text{empty dictionary}

        \State $edges  \gets \text{list of }i \text{ mapped to subsets of } source  \text{ in }S_2$
        \ForAll{$i \in edges$} \label{lst3:lin:add_source}
        \State $\text{add } \{i\} \text{ to prev\_paths with value } {source \cup \text{outvertex}(e_i)}\text{ where }e_i \in E_2$
        \EndFor

        \While{$|\text{prev\_paths}| > 0$} \label{lst3:line:search_start}
        \State $\text{del\_keys} \gets$ \text{empty list}

        \ForAll{$p \in$ $\text{prev\_paths}$}

        \If {$target \subseteq \text{value}(p) \text{ in }S_2$}
        \If{ $M$ contains no keys that are subsets of $p$}
        \State $\text{add }p\text{ to } M$
        \State \textbf{yield} $\text{ExpandMetapaths}(E_0, E_2, D, p)$
        \EndIf
        \State $\text{add }p \text{ to del\_keys}$
        \EndIf
        \EndFor

        \ForAll{$p \in\text{del\_keys}$}
        \State $\text{delete } p \text{ from prev\_paths}$
        \EndFor

        \State $\text{cur\_paths} \gets$ \text{empty dictionary}

        \ForAll{$p \in \text{prev\_paths}$}

        \State $edges  \gets \text{list of }i \text{ mapped to subsets of value} (p)  \text{ in }S_2$

        \ForAll{$i \in edges$}
        \If{$i \in p \text{ or outvertex}(e_i) \subseteq \text{value}(p) \text{ in }S_2 \text{ where } e_i \in E_2$}
        \State \textbf{continue}
        \EndIf

        \State $\text{add } p \cup \{i\} \text{ to cur\_paths with value } e_i \cup \text{value}(p)\text{ where }e_i \in E_2$
        \EndFor
        \EndFor

        \State $\text{prev\_paths} \gets \text{cur\_paths}$ \label{lst3:line:search_end}
        \EndWhile
        \EndFunction
    \end{algorithmic}
\end{breakablealgorithm}
\vspace*{5mm}
Then follows a great deal of detailed work as lines \ref{lst3:lin:startcontraction} to \ref{lst3:lin:endcontraction} implement edge combining. Here $H$ is a list of the generating set elements that feature in the outvertices of all the edges in the graph including repeats.
$K$ is a set-trie multi-map that contains only the elements of $H$ that have duplicates.
Thus, $K$ is a set-trie that contains elements found in vertices that have more than one incoming edge. Edges that visit these elements must not be combined. $S_1$ is a set-trie multi-map with keys that are the invertexes of the edges referenced in $E_1$ the set of edges the search is reduced to and $Q$ is simply a list of those invertices.

Having set the preconditions by calculating these pieces of information the algorithm iterates through each of the invertices in $Q$ with $I$ the invertex in question. By looking up $I$ in $S_1$ it is able to set $L$ to a list of references to edges that emanate from $I$. Using $L$ it is able to calculate $O$ which is a list of the outvertices of the edges in $L$ that emanate from $I$. Then the elements of $O$ are filtered to discard any outvertices which have elements in $K$ and the result is stored in $n$. Thus, $n$ stores the outvertices of the edges that emanate from invertex $I$ that are eligible for combining.

The set-trie multi-map $S_2$ is used to guide the search for paths in the same way as $S$ in Algorithm~\ref{alg:GetSingleMetapathFrom}. The lines from \ref{lst3:lin:combine_edge_branch} to \ref{lst3:lin:endcontraction}  ensure that $S_2$ can map invertexes to edge references for all the edges in our reduced search space $E_1$ and that $E_2$ is updated to include the new edges resulting from combining edges. Note that $E_2$ is a copy of $E_0$ so the references in $E_1$ which are actually indexes into the list $E_0$ are also suitable indexes into the list $E_2$.

The code branches in line \ref{lst3:lin:combine_edge_branch} based on the number of edges emanating from the vertex $I$ and the number of edges that are candidates for combining $|n|$. If either of these are less than two then there are insufficient edges to combine. Thus, the edges emanating from $I$ which are in the list $L$ are simply added to $S_2$ with $I$ as the key and $i\in L$ as the value.

If on line \ref{lst3:lin:combine_edge_branch} there are two or more edges emanating from $I$ and two or more edges eligible for combining $|n| \geq 2$ then the algorithm:
\begin{enumerate}
    \item Makes a list $S_0$ of the edges that will not be combined. This list is calculated by filtering the edges in $L$ keeping only those edges that do not have outvertices in $n$. These edges are added to $S_2$ with $I$ as the key and $i\in S_0$ as the value.
    \item  Makes a list $T$ of the edges that will  be combined. This list is calculated by filtering the edges in $L$ keeping only those edges that have outvertices in $n$. A new outvertex $out$ is formed by taking the union of outvertices of the edges in $T$. With the outvertex an edge from $I$ to $out$ is added to $E_2$ and an entry is added in $S_2$ with $I$ as the key and a reference to this new edge as the value which is stored in variable $new\_edge\_pos$. Finally, the multi-map $D$ which has the purpose of storing a map between a reference to the combined edge and the edges it is the result of combining is updated with the key being in $new\_edge\_pos$ and the values $i \in T$
\end{enumerate}

This detailed work is not essential for the main part of the algorithm to find paths but combining edges makes a large impact to runtimes on graphs with a lot of branching.

\subsubsection{Finding edge-dominant metapaths}
With edge combining complete the stage is set for the main work of the algorithm. The main variables and their interactions are as follows:

\begin{enumerate}
    \item There is an entry in the dictionary $prev\_paths$ for each path. The key is the set of edges that form the path and the value associated with that key is the set formed by taking the union of the $source$ set and all the outvertices of the edges in the path.
    \item The list of metagraph edges including those combined above are in $E_2$ and for every $e_i \in E$ the key-value $invertex(e_i)$ is mapped to $i$ in the set-trie multi-map $S_2$.  Thus given a $value(p)$ associated with a key $p$ in  $prev\_paths$ it is possible to efficiently find by reference all edges that have invertices wholly within $value(p)$ by simply calling the subset function of $S_2$ passing it the set $value(p)$.
    \item The set-trie $M$ has an entry for each metapath that the algorithm has found previously. The entry is identical to the key $p$ in $prev\_paths$ at the point that metapath $p$ was found. i.e.\ It is a set of references to the edges in the path that was yielded by the algorithm.
\end{enumerate}

For those unfamiliar with generator functions the yield call returns the value yielded but retains the state of the function, so subsequent calls to the function continue execution immediately after the call to yield.

The search is initiated with line \ref{lst3:lin:add_source} where all the edges $i$ that have outvertices that form subsets of $source$ are added as key values to the dictionary $prev\_paths$ the values associated with these keys are the union of the source set and the outvertex of $i$.

The algorithm then repeats the steps between lines \ref{lst3:line:search_start} and \ref{lst3:line:search_end} yielding when a suitable path is found but returning to the same loop until the dictionary $prev\_paths$ is empty. Those steps can be understood as follows:
\begin{enumerate}
    \item For every $p\in prev\_paths$ If the $target$ set is contained within the value associated with $p$ in the $prev\_paths$ dictionary then add $p$ to a list $del\_keys$ of entries that will be deleted from the dictionary $prev\_paths$ and check if $p$ is a superset of a previously found metapath in the set-trie $M$ if it is not then add it to $M$ and yield an expanded version of $p$ (this expansion is explained in the discussion of Algorithm \ref{alg:ExpandMetapaths}).
    \item Delete all the values stored in $del\_keys$ from $prev\_paths$.
    \item For every $p\in prev\_paths$ make a list $edges$ that have  invertices wholly within $\text{value}(p)$ by calling the subset function of $S_2$ passing it the set $\text{value}(p)$. Then for each $i$ in $edges$ check if the edge that $i$ is a reference to is already in $p$, or if the outvertices of that edge are already in $\text{value}(p)$ (meaning adding the edge would not extend the metapaths outputs). If either of these checks are true we do nothing. Otherwise, we add an entry in the dictionary $cur\_paths$ which adds a single edge to $p$, the dictionary entry has key $p \cup \{i\}$ with associated value $e_i \cup \text{value}(p) \text{ where }e_i \in E_2$. It is at this point that the consolidation of metapaths that contain the same set of edges happens. The dictionary $cur\_paths$ allows only one entry per key and paths that have the same edges have the same keys and values.  It is as fast to overwrite the data as it is to compare it.
    \item Set $prev\_paths$ to be  $cur\_paths$.
\end{enumerate}

The detail omitted here is how to expand the combined edges, we cover this in the next section.

\subsubsection{Expanding combined edges.}
Algorithm \ref{alg:ExpandMetapaths} describes the process of combining edges and operates as follows:
Because $E_2$ is a copy of $E_0$ that has the combined edges appended to it. Expanding the combined edges proceeds by iterating through the edge references in $path$ and appending any edge reference $i$  such that $i < |E_0|$ to the list that will be returned. If we encounter an edge in $path$ with a reference $i$ such that $i \ge |E_0|$ then we use the multi-dict $D$ to retrieve the list $edge\_ids$ which is the list of edge references to the edges that the edge referenced by $i$ replaced. With each $j$ in $edge\_ids$  we check if the outvertex of the edge referenced by $j$ is a subset of a set that is the union of every invertex in $path$. If so the edge $j$  is appended to the list that will be returned.

\begin{algorithm}
    \caption{Expand combined edges} \label{alg:ExpandMetapaths}
    \begin{algorithmic}[1]
        \Function{ExpandMetapaths}{$E_0$, $E_2$, $D$, $path$}

        \State $inputs \gets$ set $\bigcup_i \text{invertex}(e_i) \text{ where } e_i \in E_2 \text{ and }i \in path$
        \State $local\_edges \gets$ empty list
        \ForAll {$i$ in $path$}
        \If{$i < |E_0|$}
        \State append $i$ to $local\_edges$
        \Else
        \State $edge\_ids \gets$ list of $j$  mapped to $i$ in $D$
        \ForAll{$j \in edge\_ids$}
        \If{$\text{outvertex}(e_j) \cap inputs$ where $e_j \in E_2$}
        \State append $j$ to $local\_edges$
        \EndIf
        \EndFor
        \EndIf
        \EndFor
        \State\Return $local\_edges$
        \EndFunction
    \end{algorithmic}
\end{algorithm}

\subsection{Transitivity preserving Metagraph projection}

This projection uses dominant metapaths between a \textit{source} and \textit{target} in which no proper subset of the edges forms a metapath between the \textit{source} and \textit{target}.  Thus, the bulk of the work in generating a metagraph projection is in filtering the paths found by Algorithm~\ref{alg:getallpaths} in Section~\ref{sec:fam}. This algorithm finds all the edge dominant paths in which no proper subset of the edges forms a metapath between the \textit{source} and \textit{target} thus the dominant paths we seek are a subset of those found. Filtering these paths to isolate the dominant paths with the same subset property is simply a matter of selecting paths for which no other path found using Algorithm~\ref{alg:getallpaths} has an invertex set which is a proper subset of the path under consideration.

The projection generation algorithm, Algorithm~\ref{alg:projection}, operates by considering the set of generating elements, denoted as $generating\_subset$. For each element $x$ in this set, Algorithm~\ref{alg:getallpaths} is utilized to discover all paths connecting $generating\_subset \backslash \{x\}$ and $\{x\}$.
The paths are simplified into edges using Algorithm~\ref{alg:projection_edges}. The invertices of these edges are added to a set-trie called $tmp\_settrie$. This allows us to efficiently identify the input dominant edges, which are the ones in $tmp\_settrie$ without any subsets. The entries in $tmp\_settrie$ without subsets are then added to a set-trie multi-map $b$.
After identifying the dominant edges for each $i \in generator\_subset$, the set-trie multi-map $b$ contains all the edges representing the dominant paths. Each entry in $b$ associates an invertex (key) with an outvertex (value), forming the edges of the metagraph. The projection is completed by combining edges that share an invertex by forming an edge from the shared invertex to an outvertex which is the union of the outvertex sets. Finally, a metagraph object is returned that contains the $generating\_subset$ and any edges generated. The path simplification process in Algorithm~\ref{alg:projection_edges} is crucial as it determines the specific subset of elements from $generator\_subset$ that the path utilizes as a source. The details can be found in the following section.

\begin{algorithm}
    \caption{Projection}\label{alg:projection}
    \begin{algorithmic}[1]
        \Function{GetProjection}{$E_0, generator\_subset$}

        \ForAll{$i \in generator\_subset$}
        \State $inputs \gets$ \text{set of} $generator\_subset$
        \State $inputs = inputs \backslash \{i\}$
        \State $a \gets \text{empty list}$
        \ForAll{$x \in GetAllMetapathsFrom(E_0, inputs, \{i\})$}
        \State $\text{append } ProjectionEdges(E_0, x, inputs, generator\_subset) \text{ to }a$
        \EndFor

        \State $tmp\_settrie \gets$ \text{new SetTrie}
        \ForAll{$edge \in a$}
        \State $tmp\_settrie$.add($edge$.invertex)
        \EndFor

        \State $b \gets$ \text{new SetTrieMultiMap}

        \ForAll{$edge \in a$}
        \State $s \gets$ $tmp\_settrie$.subsets($edge$.invertex)
        \If{$|s| > 1$}
        \State \textbf{continue}
        \EndIf
        \State $b$.assign($edge$.invertex, $edge$.outvertex)
        \EndFor
        \EndFor

        \State $c \gets$ \text{empty list}
        \For{$edge$ \text{in} $b$.keys()}
        \State $outvertex \gets$ \text{empty set}
        \For{$o$ \text{in} $b$.get($edge$)}
        \State $outvertex \gets outvertex \cup o$
        \EndFor
        \State $\text{append Edge}(edge, outvertex) \text{ to }c$
        \EndFor

        \State $ret \gets$ \text{new Metagraph}($generator\_subset$)
        \State $ret$.add\_edges\_from($c$)
        \State \textbf{return} $ret$
        \EndFunction
    \end{algorithmic}
\end{algorithm}

\subsection{Edge simplification}
During the edge simplification process, the outvertex of the created edge is constrained to include only the pure outputs of the simplified metapath that are part of the $generating\_subset$. Similarly, the outvertex is limited to include only those elements from the $generating\_subset$ that are utilized by the metapath being simplified.

The outvertex is determined by taking the union of outvertex sets from all edges in the path, subtracting the union of invertex sets from all edges in the path, and then taking intersection of this result with the $generating\_subset$.

To calculate the outvertex, the edges need to be ordered in a sequence. This sequence ensures that the union of the source set and the outvertex sets of the preceding edges covers the invertex of the current edge. The process is similar to Algorithm~\ref{alg:GetSingleMetapathFrom} for finding a single path. The path is traversed in reverse, keeping track of the required outvertex sets at each step. Through this reverse traversal, the minimum subset of $generating\_subset$ necessary to complete the path is determined.

\begin{algorithm}
    \caption{Projection edges}\label{alg:projection_edges}
    \begin{algorithmic}[1]
        \Function{ProjectionEdges}{$E_0, path, source, generating\_subset$}
        \If{$|path| = 0$}
        \State \textbf{return}
        \EndIf
        \State $outputs \gets$ set $\bigcup_i \text{outvertex}(e_i) \text{ where } e_i \in E_0 \text{ and }i \in path$

        \State $outputs \gets outputs \backslash \bigcup_i \text{invertex}(e_i) \text{ where } e_i \in E_0 \text{ and }i \in path$
        \State $outvertex \gets outputs \cap generating\_subset$

        \State $bag \gets$ \text{set}(source)
        \State $new\_path \gets$ \text{empty list}
        \State $updated \gets$ \text{true}

        \While{$updated$}
        \State $updated \gets$ \text{false}

        \ForAll{$i$ \text{in} $path$}
        \If{$i$ \text{in} $new\_path$}
        \State \textbf{continue}
        \EndIf
        \If {$\text{invertex}(e_i) \subseteq bag \text{ where } e_i \in E_0$}
        \State append $i$ to $new\_path$
        \State $updated \gets$ \text{true}
        \State $bag \gets bag \cup \text{outvertex}(e_i) \text{ where } e_i \in E_0$
        \EndIf
        \EndFor
        \EndWhile

        \State $required \gets$ \text{set}(outvertex)
        \State $outvertices \gets$ \text{empty set}

        \ForAll{$i$ \text{in reversed} $new\_path$}
        \State $required \gets required \cup \text{invertex}(e_i) \text{ where } e_i \in E_0$
        \State $outvertices \gets outvertices \cup \text{outvertex}(e_i) \text{ where } e_i \in E_0$
        \State $required \gets required \backslash outvertices$
        \EndFor

        \State \textbf{return} \text{Edge}($required$, $outvertex$)
        \EndFunction
    \end{algorithmic}
\end{algorithm}

\subsection{Worked example of a TPP}
Using the metagraph shown in Figure~\ref{fig:tppmotivation}, we calculate the TPP of $\mathcal{H}=\langle\mathit{X,E} \rangle$ over $\mathit{X'=\{x_1, x_2, x_6, x_7, x_8\}}$, where $\mathit{X = \{x_1, x_2, x_3, x_4, x_5, x_6, x_7, x_8\}}$ and $\mathit{E = \{e_1, e_2, e_3, e_4, e_5\}}$. There are no dominant metapaths from any subset of $\mathit{X'}$ to either $\mathit{x_1}$ or $\mathit{x_2}$. Therefore, the set of dominant metapaths considered for the projection is as follows:

\begin{compactenum}
    \item $\mathit{M_1(\{x_1\}, \{x_6\}) = \{e_1,e_2 \}}$
    \item $\mathit{M_2(\{x_1, x_2\},\{x_7\}) = \{e_1,e_3,e_4 \}}$
    \item $\mathit{M_3(\{x_1, x_2\},\{x_8\}) = \{e_1,e_2,e_3,e_4,e_5 \}}$
    \item $\mathit{M_4(\{x_1, x_7\},\{x_8\}) = \{e_1,e_2,e_5 \}}$
    \item $\mathit{M_5(\{x_6, x_7\},\{x_8\}) = \{e_5\}}$
\end{compactenum}

The TPP selects dominant metapaths for edge creation based on a condition: none of the proper subsets of the edges from metapath $\mathit{M(V',{x'})}$ can form a dominant metapath $\mathit{M(U,{x'})}$ in $\mathcal{H}$, where $\mathit{U}\subseteq \mathit{X'}$. Consequently, metapath $\mathit{M_3({x_1, x_2},{x_8}) = \{e_1,e_2,e_3,e_4,e_5 \}}$ is excluded because it includes $e_5$, which alone forms $\mathit{M_5({x_6, x_7},{x_8}) = \{e_5\}}$. Similarly, $\mathit{M_4({x_1, x_7},{x_8}) = \{e_1,e_2,e_5 \}}$ is also excluded. This leaves the remaining metapaths to be combined as follows:

\begin{compactenum}
    \item $\mathit{C(e'_1) =\{ M_1(\{x_1\}, \{x_6\})\}}$
    \item $\mathit{C(e'_2) =\{M_2(\{x_1, x_2\},\{x_7\})\}}$
    \item $\mathit{C(e'_3) =\{M_5(\{x_6, x_7\},\{x_8\})\}}$
\end{compactenum}

Leading to the following edges:
\begin{compactenum}
    \item $\mathit{e'_1 = (\{x_1\}, \{x_6\})}$
    \item $\mathit{e'_2 = (\{x_1, x_2\},\{x_7\})}$
    \item $\mathit{e'_3 = (\{x_6, x_7\},\{x_8\})}$
\end{compactenum}

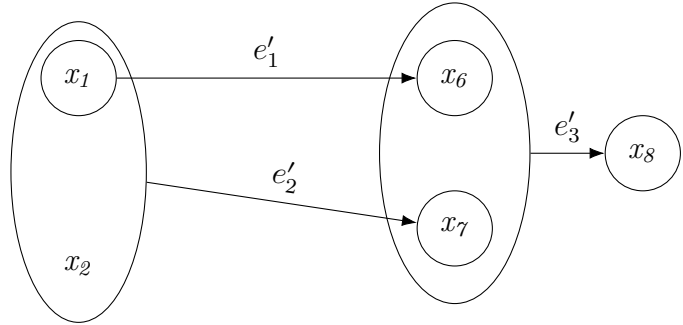
\begin{figure}[!htb]
    \centering
    \begin{tikzpicture}
        \node[minimum width=1cm,draw,circle] (1) at (0,3.5) {$\mathit{x_1}$};
        \node (2) at (0,1){$\mathit{x_2}$};
        \node[ellipse, minimum width = 1.8cm, minimum height = 4
            cm, draw] (12) at (0,2.25){};

        \node[minimum width=1cm,draw,circle] (6) at (5,3.5) {$\mathit{x_6}$};
        \node[minimum width=1cm,draw,circle] (7) at (5,1.5){$\mathit{x_7}$};

        \node[ellipse, minimum width = 2cm, minimum height = 4cm, draw] (11) at (5,2.5){};
        \node[minimum width=1cm,draw,circle] (8) at (7.5,2.5){$\mathit{x_8}$};

        \draw[-{Latex[length=2mm]}] (1) -- (6) node [midway, above, sloped] (TextNode) {$e'_1$};
        \draw[-{Latex[length=2mm]}] (12) -- (7) node [midway, above, sloped] (TextNode) {$e'_2$};
        \draw[-{Latex[length=2mm]}] (11) -- (8) node [midway, above, sloped] (TextNode) {$e'_3$};
    \end{tikzpicture}
    \caption{Transitivity preserving projection of Figure~\ref{fig:tppmotivation}.}
    \label{fig:tpp}
\end{figure}

Illustrated in Figure~\ref{fig:tpp} is the TPP of the metagraph depicted in Figure~\ref{fig:tppmotivation}, $\mathit{\mathcal{H}=\langle\mathit{X,E}\rangle}$, where $\mathit{X = \{x_1, x_2, x_3, x_4, x_5, x_6, x_7, x_8\}}$ and $\mathit{E = \{e_1, e_2, e_3, e_4, e_5\}}$, projected over $\mathit{X'=\{x_1, x_2, x_6, x_7, x_8\}}$. Compared to the metagraph projection in Figure~\ref{fig:projection}, the TPP is visually simpler.

A more striking comparison is observed between the TPP $\mathit{\mathcal{H'}_2 = \langle X'_2, E'_2 \rangle}$ (Figure~\ref{fig:H_2tpp}) of metagraph $\mathit{\mathcal{H}_2 = \langle X_2, E_2 \rangle}$ (Figure~\ref{fig:H_2}) with $\mathit{X_2 = \{F_0, A_0, A_1, A_2, B_0, B_1, B_2, C_1, C_2, D_1, D_2, E_1, E_2 \}}$ and $\mathit{E_2 = \{e_{1,2}, e_{2,2}, e_{3,2}, e_{4,2}, e_{1,1}, e_{2,1}, e_{3,1}, e_{4,1}, e_0 \} }$, projected over $\mathit{X'_2 = \{F_0, A_0, A_1, A_2, B_0, B_1, B_2 \}}$. The TPP is a significant visual improvement over the metagraph projection for $\mathit{\mathcal{H}_2}$  shown in Figure~\ref{fig:H_2p}.

\begin{figure}[!htb]
    \begin{center}
    \resizebox{0.9\textwidth}{!}{
    \begin{tikzpicture}
        \node[minimum width=1cm,draw,circle] (1) at (0,3.5) {$\mathit{A_2}$};
        \node (2) at (0,1){$\mathit{B_2}$};
        \node[minimum width=1cm,draw,circle] (6) at (5,3.5) {$\mathit{A_1}$};
        \node[minimum width=1cm,draw,circle] (7) at (5,1.5){$\mathit{B_1}$};
        \node[minimum width=1cm,draw,circle] (11) at (10,3.5) {$\mathit{A_0}$};
        \node[minimum width=1cm,draw,circle] (12) at (10,1.5){$\mathit{B_0}$};
        \node[minimum width=1cm,draw,circle] (13) at (12.5,2.5){$\mathit{F_0}$};
        \node[ellipse, minimum width = 2cm, minimum height = 4cm, draw] (1-2) at (0,2.5){};
        \node[ellipse, minimum width = 2cm, minimum height = 4cm, draw] (67) at (5,2.5){};
        \node[ellipse, minimum width = 2cm, minimum height = 4cm, draw] (1112) at (10,2.5){};

        \draw[-{Latex[length=2mm]}] (1) -- (6) node [midway, above, sloped] (TextNode) {$e'_1$};
        \draw[-{Latex[length=2mm]}] (1-2) -- (7) node [midway, above, sloped] (TextNode) {$e'_2$};
        \draw[-{Latex[length=2mm]}] (6) -- (11) node [midway, above, sloped] (TextNode) {$e'_3$};
        \draw[-{Latex[length=2mm]}] (67) -- (12) node [midway, above, sloped] (TextNode) {$e'_4$};
        \draw[-{Latex[length=2mm]}] (1112) -- (13) node [midway, above, sloped] (TextNode) {$e'_5$};
    \end{tikzpicture}
    }
    \end{center}
    \caption{$\mathit{\mathcal{H}'_2}$ from the transitivity preserving projection of Figure~\ref{fig:H_2}}

    \label{fig:H_2tpp}
\end{figure}
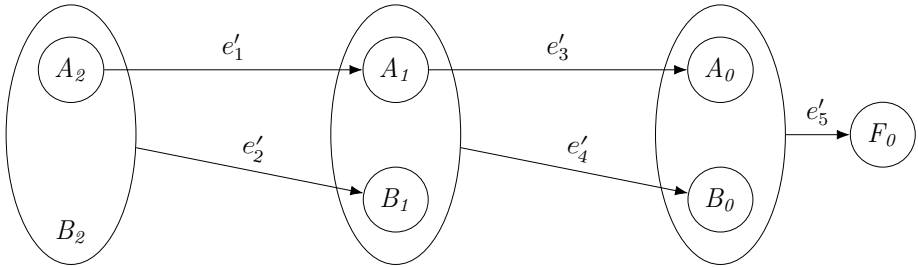

We deduce that the TPP $\mathcal{H}'_n$ of $\mathcal{H}_n$ (as defined in Lemma \ref{complex}) has $\mathit{2n + 1}$ edges, which is always less than the $\mathit{4n + 1}$ edges in $\mathcal{H}_n$.

\subsection{TPP complexity}

In our discussion on the complexity of the metagraph projection, we recognized the need to capture relationships between elements of $\mathit{X'}$ resulting from edges involving elements in $\mathit{X\backslash X'}$ using a form of transitive closure. The TPP employs the same closure mechanism as the metapath projection, but solely for capturing relationships involving edges that involve elements from $\mathit{X\backslash X'}$. Because it only calculates these relationships using dominant paths that part of the projection is source-minimum in the sense defined in \citet{ausiello1986} (i.e.\ there is no B-hypergraph equivalent to $\mathcal{H}$ with fewer source sets).

The TPP reduces relationships between elements of $\mathit{X'}$ resulting from edges not involving elements in $\mathit{X\backslash X'}$ to the dominant edges that existed between these elements in the metagraph. As these edges would also be present in the metagraph projection, it can be deduced that the number of edges in the TPP is always less than or equal to the number of edges in the metagraph projection.

A transitive closure can potentially have more edges than the original metagraph, meaning that the TPP may contain more edges than its preimage depending on the exact geometry of the underlying metagraph. But this is also the case with the metagraph projection.

In terms of edge count the worst case performance of the TPP is precisely the same as that of the metagraph projection but in all cases it has an advantage in computation as we demonstrate in the following section.

\subsection{Algorithmic complexity}
The algorithm for the construction of the metagraph projection is as follows:
\begin{enumerate}
    \item For every $\mathit{x'\in X'}$, for every $\mathit{V'\subseteq X'\backslash\{x\}}$ find all the dominant paths $\mathit{M(V',\{x'\})}$ in $\mathcal{H}$ then add them to a set of dominant metapaths $\mathit{\{M_1(V',\{x'\}),\ldots\}}$;
    \item For every unique invertex $\mathit{V'}$ of the metapaths in $\mathit{\{M_1(V',\{x'\}),\ldots\}}$ construct an edge $\mathit{e' = \langle V', W' \rangle \in E'}$ where $\mathit{W' = \{\bigcup_{i}\,x_i |\exists\; M_1(V',\{x_i\}) \in \{M_1(V',\{x'\}),\ldots\}\}}$.
\end{enumerate}

The significant part here is that for each $\mathit{x'\in X'}$ there are $\mathit{2^{|X'|-1}}$ possible $\mathit{V'}$. Finding dominant paths is an NP hard problem \citep{ward2023} that currently involves a brute force solution. Doing this $\mathit{|X'|2^{|X'|-1}}$ times is prohibitive for anything but small $\mathit{|X'|}$.

Creating the TPP is seemingly as prohibitive as its definition filters the same set of dominant metapaths as the metagraph projection. However, in this case, the filtering can be performed before identifying the required dominant metapaths. The algorithm for constructing the TPP is as follows:

\begin{enumerate}
    \item For every $\mathit{x'\in X'}$, find all the edge dominant paths from $\mathit{M(X' \backslash \{x'\},\{x'\})}$ in $\mathcal{H}$ then filter them to find only the dominant metapaths and add these to a set of dominant metapaths $\mathit{\{M_1(V',\{x'\}),\ldots\}}$;
    \item For every unique invertex $\mathit{V'}$ of the metapaths in $\mathit{\{M_1(V',\{x'\}),\ldots\}}$ construct an edge $\mathit{e' = \langle V', W' \rangle \in E'}$ where $\mathit{W' = \{\bigcup_{i}\,x_i |\exists\; M_1(V',\{x_i\}) \in \{M_1(V',\{x'\}),\ldots\}\}}$.
\end{enumerate}

Thus, the exhaustive search for metapaths occurs only $\mathit{|X'|}$ times. Filtering edge dominant metapaths to find dominant metapaths can be accomplished in polynomial time using a settrie data structure. This structure stores the invertices of all the edge dominant paths and allows checking each path against the settrie to determine if it contains an invertex that is a subset of the current path's invertex. If such a subset is found, the current path is rejected.

Therefore, while the TPP relies on the solution of an NP-hard problem, it provides a clear advantage in terms of computation time compared to the metagraph projection.

\section{Implementation and Results}

\subsection{Data Structure}

We use a relatively new data structure  called a set-trie multimap \citep{Savnik2013} to direct this search. Set-tries are efficient structures for finding subsets and supersets of a given set. Our algorithms use set-tries in the following way:

\begin{enumerate}
    \item Given a list of metagraph edges $E$ then for every $e_i \in E$ the key-value $invertex(e_i)$ is mapped to $i$ in the set-trie $S$.  Thus given a set $B$ of elements it is possible to efficiently find by reference all edges that have invertices wholly within $B$ by simply calling the subset function of $S$ passing it the set $B$.
    \item Given a list of metagraph edges $E$ then for every $e_i \in E$ the key-value $invertex(e_i)$ is stored in the set-trie $S$.  Thus given an edge $e$ it is possible to efficiently determine if there is an edge $e_j \in E$ that has a smaller invertex set. This is used in the context of metagraph projections. Where all the edges in $E$ have the same outvertex set, and we are looking for a dominant edge.
\end{enumerate}


\subsection{Experiments}

We present here the results of our new algorithms on a few real-world examples. We also cross compare them with previous works. We apply our TPP to two real-world examples of metagraphs: (1) Aircraft supply chain metagraph~\cite{hopp2023}, and (2) a security metagraph~\cite{ranathunga2020verifiable}.

\subsubsection{Aircraft supply chain Metagraph}

In~\cite{hopp2023}, Hopp et al. explore the application of a conditional metagraph to enhance cybersecurity maturity in the aviation supply chain, focusing on the landing gear supply chain. The metagraph represents the supply chain as a network of nodes (suppliers and components) and edges (relationships and information exchanges), with conditional attributes such as cybersecurity maturity levels, enabling algebraic calculations to identify secure paths and analyze breaches.  The graph in \autoref{fig:aircraftmetagraph} is the metagraph of the example that was provided in~\cite{hopp2023}.

\begin{figure}[htbp]
    \centering
    \includegraphics[width=\linewidth]{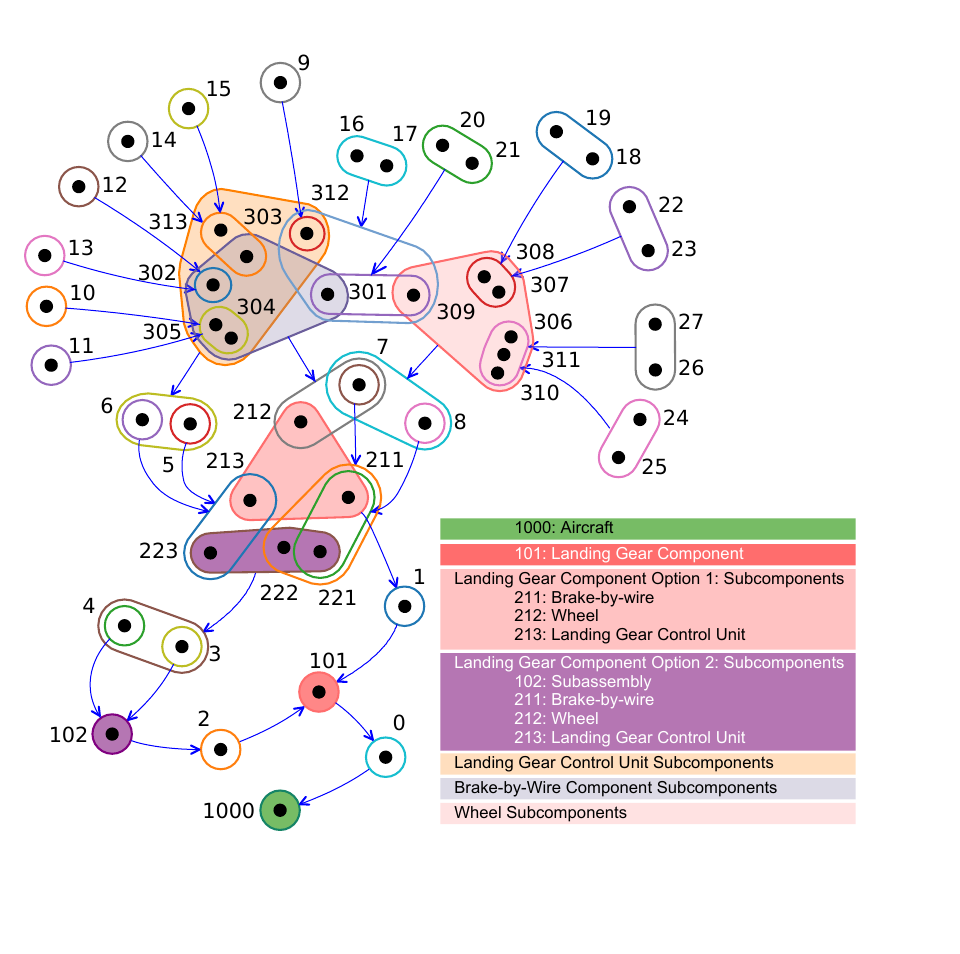}

    \caption{Metagraph representation of a supply chain for an aircraft part, adapted from \citet{hopp2023}. For visual clarity, helper node 9999 is not shown. In the original, node 9999 serves as the invertex for 14 metaedges with outvertex sets \{9\}, \{10\}, \{11\}, \{12\}, \{13\}, \{14\}, \{15\}, \{16, 17\}, \{18, 19\}, \{20, 21\}, \{22, 23\}, \{24, 25\}, and \{26, 27\}.\label{fig:aircraftmetagraph}}

\end{figure}

\begin{figure}[htbp]
    \centering
    \includegraphics[width=\linewidth]{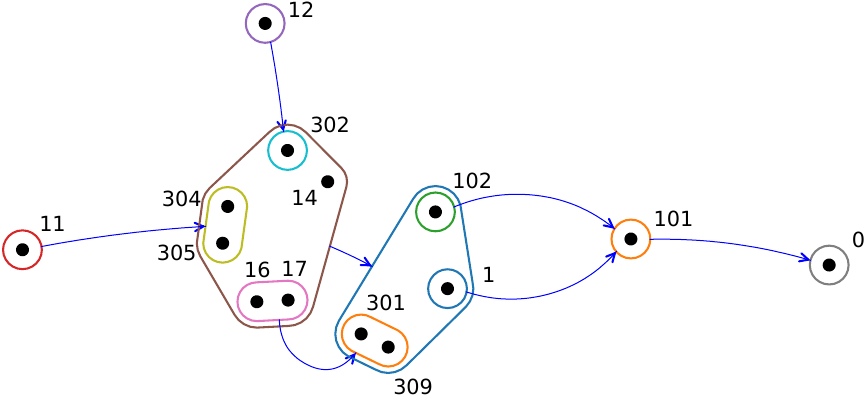}
    \hspace*{-2cm}  
    \caption{Projection of Metagraph of a supply chain for an aircraft part projected over $\mathit{X' = \{0, 1, 11, 12, 14, 16, 17, 101, 102.  301, 302, 304, 305, 309\}}$\label{fig:aircraftmetagraphprojection}.}
\end{figure}


It was shown in~\cite{hopp2023} that the metagraph model enhances cybersecurity by filtering suppliers based on maturity, calculating secure sourcing alternatives, supporting rapid incident response through attack propagation tracing, and integrating global standards like CMMC, NIST, and ISO. However, the metagraph faces scalability limitations due to its inability to efficiently handle large-scale, industry-wide supply chain data, as the computational complexity of metapath calculations grows with the number of nodes and edges. Additionally, visualization challenges arise from the need to balance transparency with competitive advantage concerns, requiring projections to conceal sensitive supplier information, which complicates practical implementation.



The algorithm for finding all paths of \citet{basu2007metagraphs} and \citet{blanning1997} exhausted 16GB of memory over the course of an hour attempting to find all the paths between nodes $9999$ and $1000$. Their algorithm for calculating projections did not terminate in 24 hours when attempting to calculate the projection over the element set $\mathit{X' = \{ 9,..,27, 303, 211, 212, 213, 223, 222, 221\}}$. Our algorithm for finding all paths finds all 5248 edge dominant paths (in this case they are all also dominant paths) in less than 2 minutes. Likewise, a TPP projection over $\mathit{X' = \{ 9,..,27, 303, 211, 212, 213, 223, 222, 221\}}$ takes less than 0.2 seconds to complete.   \autoref{fig:aircraftmetagraphprojection} shows a projection over $\mathit{X' = \{0, 1, 11, 12, 14, 16, 17, 101, 102.  301, 302, 304, 305, 309\}}$ which also took less than 0.2 seconds to complete.

\subsubsection{Security Metagraph}
\autoref{fig:UoA_Metagraph_full} illustrates the University of Adelaide network represented as a metagraph,  first presented in~\cite{ranathunga2020verifiable}. The case study in~\cite{ranathunga2020verifiable} evaluates the application of metagraph-based policy modeling and verification in a large-scale university network. The network comprises over 30,000 users, 43 logical zones, and hundreds of devices, including firewalls and routers. Using a custom-built parser and the MGtoolkit, the authors extracted and modeled over 1,000 high-level policy rules across multiple domains such as access control, QoS, intrusion detection, and URL filtering.  A key highlight of the study is the use of metagraph projections, exemplified in \cite{ranathunga2020verifiable}. A more detailed version of the network is shown in \autoref{fig:UoA_Metagraph_full}. In~\cite{ranathunga2020verifiable}, the authors showed that by using projection, they significantly simplified complex policy graphs by focusing on relevant subsets of the network. This visualization capability proved especially valuable for administrators in understanding and managing policy reachability and conflict resolution, outperforming traditional tools in clarity and effectiveness. The projection in~\cite{ranathunga2020verifiable} was done manually and on a specific subset of nodes.

\autoref{fig:UoA_Metagraph_No_DMZ} demonstrates a TPP projection of \autoref{fig:UoA_Metagraph_full} over the generating set, excluding blacklist\_IPs, subzone\_exception and exception. This calculation takes approximately 0.1 second to execute. In comparison, the algorithm proposed by \citet{basu2007metagraphs} consumed 16GB of memory in 15 minutes when attempting their projection method on the same set and did not terminate within 24 hours. Additionally,  TTP projection of \autoref{fig:UoA_Metagraph_No_DMZ} over the generating set, excluding the two exception sets, takes around 0.1 seconds to compute, while the algorithm by Blanning et al \cite{basu2007metagraphs} exhausts memory and again yields no solution within 24 hours.


These preliminary results confirm the validity of our approach. We have discovered that the definition of a metapath can be detached from the concept of a simple-path.
Additionally, we have devised a polynomial time algorithm for identifying individual paths, which proves useful for determining connectivity and dominance. Further, our find all paths and projection algorithm now possess the scalability required to address some real-world problems modeled as metagraphs.

\begin{sidewaysfigure}[htbp]
    \centering
    \includegraphics[width=12cm]{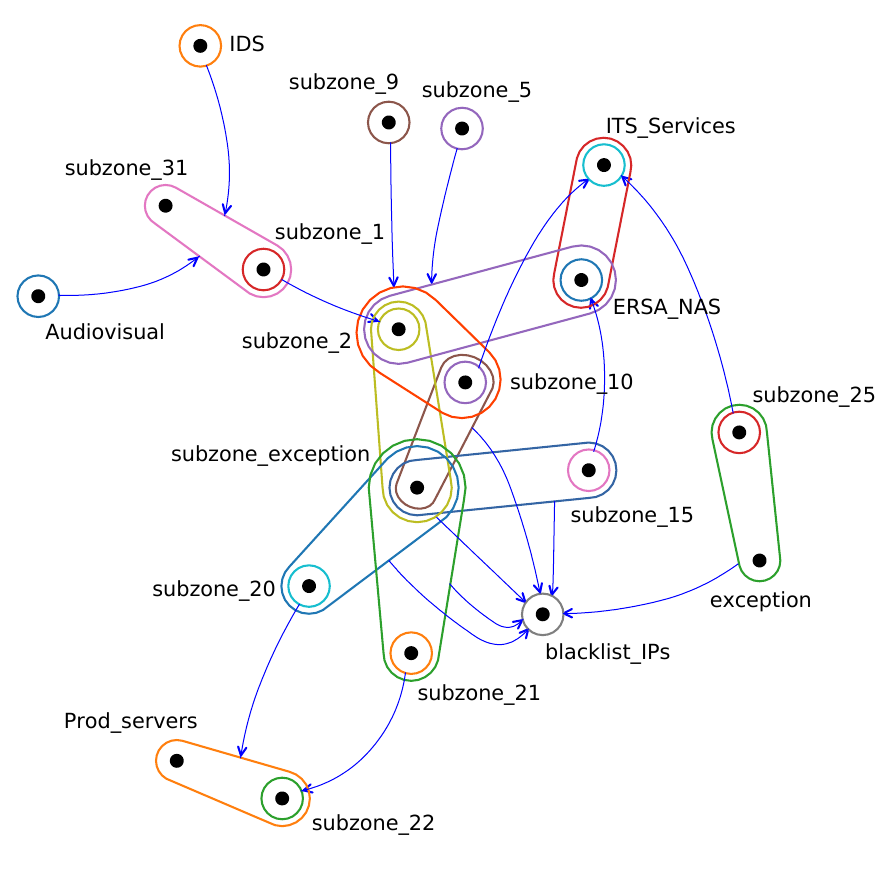}
    \caption{University of Adelaide metagraph.}\label{fig:UoA_Metagraph_full}
\end{sidewaysfigure}
\begin{sidewaysfigure}[htbp]
    \centering
    \includegraphics[width=12cm]{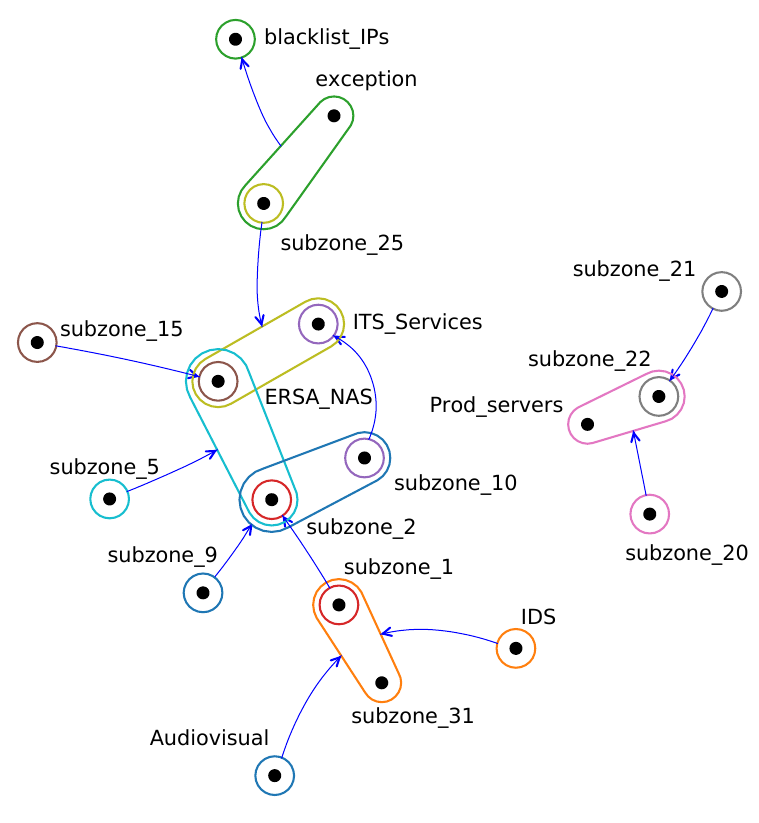}
    \caption{University of Adelaide metagraph projection without Exceptions for Blacklisted IPs zones.}\label{fig:UoA_Metagraph_No_DMZ}
\end{sidewaysfigure}


It should be noted that for finding all paths between $source$ and $target$ sets, the search space is exponential in the number of edges in the metagraph. \autoref{alg:getallpaths} exploits every opportunity to minimize the number of edges being considered in the search space but at present does not employ any form of multiprocessing. There is an opportunity to parallelize this algorithm. However, if we increase the number of edges to be considered by one, the search space will double. To effectively handle this increased workload, we require a significant level of parallelization, such as that which can be achieved through the use of GPUs.
The breadth-first search algorithm for finding all paths is suitable for a CUDA implementation. Existing CUDA implementations of breadth-first search exist for classic graphs structures, such as the one by \citet{harish2007}, and a similar scheme could be implemented for metagraphs.
\section{Conclusions}

This study establishes a formal connection between the metagraph projection defined by \citet{basu2007metagraphs} and the succinct transitive closure of hypergraphs introduced by \citet{ausiello2016}. Contrary to its intended purpose, we demonstrate that the metagraph projection can, in fact, increase structural complexity. Through a detailed examination of its construction, we propose the Transitivity Preserving Projection (TPP), a novel alternative that retains essential transitive relationships while offering a provably unique and minimal representation. The TPP not only enhances visual clarity but also significantly improves algorithmic efficiency, thereby advancing the state of the art in metagraph simplification and visualization.

The algorithms introduced in this work extend the practical boundary of metagraph analysis, enabling the handling of larger and more complex instances than previously feasible. However, further empirical and theoretical work is needed to rigorously characterize their computational complexity, particularly given the intricate dependencies between vertex sets and edge structures. These algorithms also lay the groundwork for re-implementing other core metagraph operations. Notably, the current exponential-time algorithm for identifying dominant paths can now be replaced with a polynomial-time alternative, thereby establishing that the NP-hard problem of finding dominant metapaths \citep{ward2023} is also NP-complete.

\section*{Acknowledgements}

The Commonwealth of Australia (represented by the Advanced Strategic Capabilities Accelerator Emerging and Disruptive Technologies program of the Defence Science and Technology Group) supports this research through a Defence Science Partnerships agreement. The first author also gratefully acknowledges the support of the University of Adelaide in the form of a University of Adelaide scholarship. 

\bibliography{mgtoolkit-algorithms}
\bibliographystyle{agsm}
\end{document}